\newenvironment{proof}{{\indent \indent \it Proof:}}{\hfill $\blacksquare$\par}
\newtheorem{proposition}{\hspace{1em}Proposition}}
\newtheorem{remark}{\hspace{1em}Remark}}
\newcommand{\Rmnum}[1]{\expandafter\@slowromancap\romannumeral #1@}
\def\BibTeX{{\rm B\kern-.05em{\sc i\kern-.025em b}\kern-.08em
    T\kern-.1667em\lower.7ex\hbox{E}\kern-.125emX}}
\newcommand\notsotiny{\@setfontsize\notsotiny\@vipt\@viipt}
\titlespacing*{\section}{0pt}{1.2ex plus .0ex minus .0ex}{.3ex plus .0ex}
\titlespacing*{\subsection}{0pt}{1.2ex plus .0ex minus .0ex}{.3ex plus .0ex}
\begin{document}
\title{Performance Analysis of Wireless-Powered Pinching Antenna Systems 
}

\author{Kunrui Cao, Jingyu Chen, Panagiotis D. Diamantoulakis,~\IEEEmembership{Senior Member,~IEEE}, Lei Zhou,\\ Xingwang Li,~\IEEEmembership{Senior Member,~IEEE}, Yuanwei Liu,~\IEEEmembership{Fellow,~IEEE}, and George K. Karagiannidis,~\IEEEmembership{Fellow,~IEEE}

\thanks{
    Kunrui Cao, Jingyu Chen, and Lei Zhou are with Information Support Force Engineering University, Wuhan 430035, China, and also with the School of Information and Communications, National University of Defense Technology, Wuhan 430035, China (e-mail: krcao@nudt.edu.cn; chenjingyu@nudt.edu.cn; cat\_radar@163.com).
    
    Panagiotis D. Diamantoulakis, and George K. Karagiannidis are with the Department of Electrical and Computer Engineering, Aristotle University of Thessaloniki, Thessaloniki 54124, Greece (padiaman@auth.gr; geokarag@auth.gr).

    Xingwang Li is with the School of Physics and Electronic Information Engineering, Henan Polytechnic University, Jiaozuo 454099, China, and also with the National Mobile Communications Research Laboratory, Southeast University, Nanjing 210096, China (e-mail:lixingwangbupt@gmail.com).

    Yuanwei Liu is with the Department of Electrical and Electronic Engineering, the University of Hong Kong, Hong Kong, China (e-mail: yuanwei@hku.hk).
    }

}
\maketitle

\begin{abstract}
    Pinching antenna system (PAS) serves as a groundbreaking paradigm that enhances wireless communications by flexibly adjusting the position of pinching antenna (PA) and establishing a strong line-of-sight (LoS) link, thereby reducing the free-space path loss.
    This paper introduces the concept of wireless-powered PAS, and investigates the reliability of wireless-powered PAS to explore the advantages of PA in improving the performance of wireless-powered communication (WPC) system.
    In addition, we derive the closed-form expressions of outage probability and ergodic rate for the practical lossy waveguide case and ideal lossless waveguide case, respectively, and analyze the optimal deployment of waveguides and user to provide valuable insights for guiding their deployments.
    The results show that an increase in the absorption coefficient and in the dimensions of the user area leads to higher in-waveguide and free-space propagation losses, respectively, which in turn increase the outage probability and reduce the ergodic rate of the wireless-powered PAS.
    However, the performance of wireless-powered PAS is severely affected by the absorption coefficient and the waveguide length, e.g., under conditions of high absorption coefficient and long waveguide, the outage probability of wireless-powered PAS is even worse than that of traditional WPC system.
    While the ergodic rate of wireless-powered PAS is better than that of traditional WPC system under conditions of high absorption coefficient and long waveguide.
    Interestingly, the wireless-powered PAS has the optimal time allocation factor and optimal distance between power station (PS) and access point (AP) to minimize the outage probability or maximize the ergodic rate.
    Moreover, the system performance of PS and AP separated at the optimal distance between PS and AP is superior to that of PS and AP integrated into a hybrid access point.
\end{abstract}

\begin{IEEEkeywords}
    Pinching antenna, wireless-powered communications, performance analysis.
\end{IEEEkeywords}

\section{Introduction}
With the large-scale commercialization of the fifth generation (5G), the global industry and academia have begun research and exploration into the next generation of mobile communication technology, namely the sixth generation (6G).
Compared to 5G, 6G will aim for faster transmission rates, larger-scale connectivity, broader coverage, ultra-low latency, ultra-high precision positioning, integrated communication and sensing, greater intelligence, enhanced security, and improved interoperability \cite{10054381}.
In June 2021, the white paper on 6G overall vision and potential key technologies published by the IMT-2030 Promotion Group envisioned a 6G vision of ``intelligent connection of everything and digital twins'', and proposed the potential 6G application scenarios such as immersive extended reality, digital twins, global coverage, and ubiquitous intelligence \cite{Whitepaper1}.
These application scenarios have extremely high requirements for the reliability of wireless communication systems.
To achieve this goal, two methods are currently being discussed in academia, both of which aim to improve the reliability of communication system by actively reconfiguring channel parameters.
The first method is massive MIMO, which expands the dimensionality of the wireless channel by creating separate parallel point-to-point links, with the number of antennas expected to scale to hundreds or even thousands \cite{11026007}.
However, one of its key challenges is extremely high complexity and costly implementation, as each antenna typically requires a dedicated radio frequency (RF) feed.
The second method is programmable wireless environments (PWEs), which focus on enabling programmability of wireless channels and are expected to emerge as a new paradigm in 6G networks.
Specifically, PWEs employ precisely designed reconfigurable elements to directionally control the propagation of electromagnetic (EM) waves.

In the context of PWEs, one of the most representative technologies is reconfigurable intelligent surface (RIS), which is deployed between the transmitter and receiver and intelligently reshapes wireless channels by adjusting its phase shift \cite{9326394,10659326,10915642}.
In addition, the RIS capabilities can be expanded by improving its architecture. 
For example, simultaneous transmitting and reflecting RIS (STAR-RIS) can achieve 360-degree coverage \cite{9437234}, active RIS can amplify incident signals \cite{9998527}, holographic RIS can achieve holographic communication \cite{9374451}, zero-energy RIS can absorb the energy of incident waves to achieve self-sustainability \cite{10348506,11072046}, and multi-functional RIS can simultaneously achieve signal reflection, refraction, amplification, and energy harvesting functions \cite{10679239}.
Fluid antenna technique can flexibly switch positions within the antenna structure to locate the best channel, while movable antenna technique can dynamically alter antenna positions to improve channel conditions \cite{9264694,10286328}.
However, these reconfigurable technologies focus solely on enhancing channel gain and mitigating small-scale fading, neglecting path loss, which is a critical factor affecting the performance of wireless system.
Specifically, the double fading effect caused by RIS cascaded channels results in extremely high path loss during remote-distance transmission.
Additionally, fluid antennas and movable antennas can only adjust their positions within the wavelength scale range, and fail to address large-scale path loss.
It is worth noting that in the aforementioned technologies, the number of antennas is determined prior to system design, and is unable to be added or removed afterward.

Pinching antenna (PA) technique, as a groundbreaking paradigm to handle these limitations, converts path loss into a reconfigurable parameter to provide a new degree of freedom, while addressing the challenges of free-space path loss and line-of-sight (LoS) blockage \cite{11180028,11018390}.
The concept and prototype of pinching antenna system (PAS) were first proposed by NTT DOCOMO in 2021 \cite{NTTDOCOMO}.
Specifically, PAS uses dielectric waveguide as the transmission medium, and radiates EM waves from the waveguide into free-space through small separated dielectric elements termed as PA.
Due to the physical characteristic of PA, it can be dynamically deployed at any position on the waveguide to radiate or receive EM waves, and can be added or removed at any time according to user services and channel conditions to tailor the communication area and establish a strong LoS link.
Therefore, PA is expected to play an important role in 6G networks, and evolve into a new generation of transformative technology.


Recently, some scholars have conducted researches on PAS to explore its potential to meet the growing demands of 6G and beyond, but it is still in its infancy \cite{10896748,10909665,10945421,10912473,10976621,10981775,11106459,11096622,11206485}.
The authors in \cite{10896748} and \cite{10909665} optimized the downlink and uplink PAS transmission rate maximization problems respectively, emphasizing the enormous potential of PAS for robust and efficient communication in 6G networks.
In \cite{10945421}, the authors demonstrated the superior performance of non-orthogonal multiple access (NOMA) assisted PAS, and the authors in \cite{10912473} further investigated a NOMA assisted downlink PAS, aiming to find the activation location and number of pinching antennas (PAs) to maximize system throughput.
Moreover, the authors in \cite{10976621} analyzed the reliability of PAS and derived the optimal position of PA. 
In \cite{10981775}, the authors derived the upper bound of the PAS array gain under fixed antenna spacing, as well as the optimal number of antennas and optimal antenna spacing for maximizing the array gain.
Although the above studies provide a detailed analysis of information transmission (IT) in PAS, they neglect the importance of energy transfer (ET) in future communication systems. While 6G aims to provide seamless three-dimensional ubiquitous coverage and connectivity, it is also moving toward energy-sustainable and lightweight communication, which offers an infinite device lifetime without the need for manual battery charging or replacement from the user's perspective \cite{9193903}. In this direction, wireless-powered communication (WPC) eliminates the need for regular manual battery replacement, significantly extending the lifespan of wireless devices and paves the road for ture mobility \cite{9858871}.
The authors in \cite{11106459} considered a novel PAS assisted simultaneous wireless information and power transfer (SWIPT) to facilitate IT to multiple information receivers and ET to multiple energy receivers simultaneously, but they did not account for the impact of practical waveguide loss on PAS performance.
Different from SWIPT, wireless-powered communication (WPC) divides the entire communication process into two stages, namely the ET stage and the IT stage. The authors in \cite{11096622} introduced a wireless powered pinching antenna network, and proposed three approaches to simplify the resource allocation problem and then solved it efficiently using convex optimization methods.
Similarly, the authors in \cite{11206485} considered both time division multiple access (TDMA) and NOMA protocols in PA-aided WPC network.

However, in the context of PA, the existing researches have not considered the performance analysis of wireless-powered PAS.
Motivated by the consideration, this paper aims to develop a novel WPC system, utilizing two waveguides to mitigate the double near-far problem inherent in traditional WPC system, and offers a comprehensive evaluation of the performance improvements by PAS to achieve more efficient and reliable communications.
The key contributions of this paper are summarized as follows.

\begin{itemize}
    \item The characteristic of PA makes path loss a programmable parameter, which greatly mitigates the path loss of the communication link and emphasizes the high compatibility between PA and WPC.
    Accordingly, combining WPC and PA technologies, we introduce the concept of wireless-powered PAS, in which a user is randomly deployed within a rectangular area, to explore the advantages of PA in improving the reliability of WPC system.

    \item We consider the cases of practical lossy waveguide and ideal lossless waveguide, and derive the closed-form expressions for outage probability and ergodic rate with PA deployed in alignment with user, respectively, to evaluate the performance gap between wireless-powered PAS and traditional WPC system.
    Then, the optimal deployment analysis of waveguides is conducted to provide valuable insights for guiding their deployments.
    Moreover, the derived closed-form expressions closely aligns with the simulation results, validating the accuracy of the theoretical analysis of outage probability and ergodic rate.

    \item The theoretical and numerical results reveal the following conclusions:
    1) Compared with traditional WPC system, PA can be flexibly moved along the dielectric waveguide to mitigate the path loss and establish a strong LoS communication link;
    2) As the transmission power increases, the outage probability of wireless-powered PAS decreases sharply, while the ergodic rate of wireless-powered PAS increases steadily;
    3) The increases in the absorption coefficient and the dimensions of user area lead to an increase in the outage probability and a decrease in the ergodic rate of wireless-powered PAS;
    4) Under conditions of high absorption coefficient and long waveguide, the outage probability of wireless-powered PAS is even worse than that of traditional WPC system, while the ergodic rate of wireless-powered PAS is better than that of traditional WPC system;
    5) Interestingly, the wireless-powered PAS has an optimal distance between power station (PS) and access point (AP) as well as an optimal time allocation factor to minimize the outage probability or maximize the ergodic rate;
    6) The system performance of PS and AP separated at the optimal distance between PS and AP is superior to that of PS and AP integrated into a hybrid access point (HAP).
\end{itemize}

The remaining of this paper is organized as follows.
In Section \Rmnum{2}, we introduce the basic principles of PAS, and the concept of wireless-powered PAS.
Then, the channel and signal models for wireless-powered PAS are presented.
In Section \Rmnum{3}, we derive closed-form expressions for the outage probability and ergodic rate of wireless-powered PAS, including the practical lossy waveguide case and the ideal lossless waveguide case. Moreover, we conduct the position analysis for the waveguides and user.
In Section \Rmnum{4}, we present the simulation results and discussion to obtain the useful insights.
In Section \Rmnum{5}, we summarizes the findings and concludes the paper.

\section{System Model}
\subsection{System Setup}
As shown in Fig. \ref{system_model}, we consider the dual-link WPC scenario, i.e., downlink ET and uplink IT, and introduce a wireless-powered PAS, which consists of a PS, an AP, and an energy-constraint user (U) equipped with a single antenna.
Moreover, the PS and AP use PAs along the waveguides to establish stable LoS communication links and improve communication performance.
Different from the conventional electronic antennas, the phenomenon of electromagnetic (EM) coupling between the PA and the dielectric waveguide makes it possible to exchange power between them, and then EM waves are radiated into free-space.
Due to the passive nature of PA, its position can be flexibly adjusted on the waveguide for better channel conditions.
In this paper, it is assumed that the line connecting PS and AP is the $y$-axis (the direction of PS is positive), their symmetry axis is the $x$-axis, and the distance between PS and AP is $L$.
The waveguides of PS and AP are placed parallel to the $x$-axis at a height $h$, spanning a length equal to $D_x$.
Thus, the positions of the PS and AP waveguide feeding points can be denoted as ${\bm \psi}_p=\left( 0,\frac{L}{2},h \right)$ and ${\bm \psi}_a=\left( 0,-\frac{L}{2},h \right)$, respectively.
Both PS and AP have one PA on the waveguide, named PA-1 and PA-2, respectively. 
Since the PA can radiate signals at any position on the waveguide, the positions of PA-1 and PA-2 are denoted by ${\bm \psi}_{1}^{\rm pin}=\left( x_{1}^{\rm pin},\frac{L}{2},h \right)$ and ${\bm \psi}_{2}^{\rm pin}=\left( x_{2}^{\rm pin},-\frac{L}{2},h \right)$, where $x_{1}^{\rm pin},x_{2}^{\rm pin} \in \left[0,D_x\right]$.
In addition, we assume that U is randomly located in a rectangular region in the $x$-$y$ plane, whose lengths on the $x$-axis and $y$-axis are equal to $D_x$ and $D_y$, respectively.
The position of U is denoted by ${\bm \psi}_m = \left(x_m,y_m,0\right)$, where $x_m$ is uniformly distributed in $\left[0,D_x\right]$, and $y_m$ is uniformly distributed in $\left[-\frac{D_y}{2},\frac{D_y}{2}\right]$.

\begin{figure}[t]
    \centering
    \includegraphics[width=\linewidth]{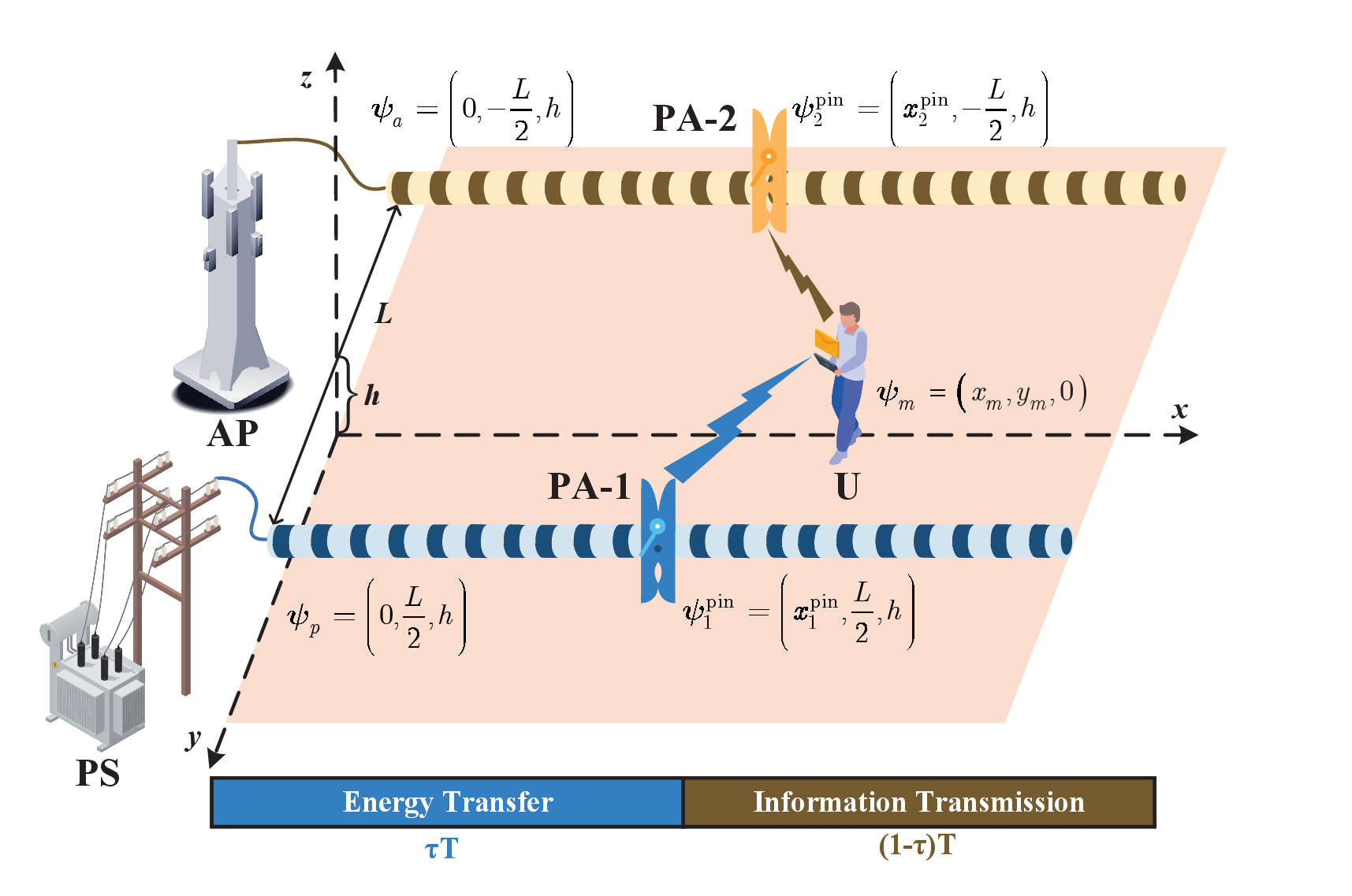}
    \caption{Wireless-powered pinching antenna system.}
    \label{system_model}
\end{figure}

\subsection{Channel Model}

According to the spherical-wave channel model, the free-space propagation channel between PA-$i$ and U is given by
\begin{align}
h_i=\frac{\sqrt{\beta} e^{-j\frac{2\pi}{\lambda} \left\lVert {\bm \psi}_m - {\bm \psi}_{i}^{\rm pin} \right\rVert} }{\left\lVert {\bm \psi}_m - {\bm \psi}_{i}^{\rm pin} \right\rVert },
\end{align}
where $i \in \{1,2\}$, $\beta=\left(\frac{\lambda}{4\pi}\right)^2$ denotes the path loss in free-space at a reference distance of 1 m, and $\left\lVert \cdot \right\rVert $ denotes the Euclidean norm.
Besides, $\lambda=\frac{c}{f_c}$ denotes the wavelength in free-space, $c$ denotes the speed of light, and $f_c$ denotes the carrier frequency.

Due to the propagation properties of dielectric material, when a signal propagates within a waveguide, the signal interacts with the dielectric core and the surrounding material within the waveguide, resulting in a change in the guided wavelength as $\lambda_g=\frac{\lambda}{n_{\rm eff}}$, where $n_{\rm eff}$ denotes the effective refractive index.
Therefore, the signal transmitted through the waveguide experiences an additional phase shift, which is given by
\begin{align}
g_i=e^{-j\frac{2\pi}{\lambda_g}\left\lVert {\bm \psi}_{i}^{\rm pin} - {\bm \psi}_j \right\rVert },
\end{align}
where $j=p$ when $i=1$, and $j=a$ when $i=2$.

Based on a rigorous derivation of Maxwell's equations, there is an exponential power attenuation of the signal as it passes through the waveguide, and the absorption coefficient $\alpha \in [0,+\infty)$ is used to characterize this phenomenon \cite{10976621}.
It is reasonable to assume that the absorption coefficient of the waveguides of PS and AP are the same.

\subsection{Signal Model}
Within a time frame $T$, the communication process is divided into two stages, i.e., ET stage and IT stage.
During the ET stage $\tau T \ (0<\tau<1)$, PS transfers energy directly to U.
By employing the linear EH model, the harvested energy of U can be expressed as
\begin{align}
    \label{Eu}
    &E_u = {\eta}E_t e^{-\alpha\left\lVert {\bm \psi}_{1}^{\rm pin} - {\bm \psi}_p \right\rVert } \left\lvert h_1 g_1 \right\rvert^{2} \nonumber
    \\
    &= \frac{ {\eta} \beta E_t e^{-\alpha\left\lVert {\bm \psi}_{1}^{\rm pin} - {\bm \psi}_p \right\rVert  }  \left\lvert e^{-j\left( \frac{2\pi}{\lambda} \left\lVert {\bm \psi}_m - {\bm \psi}_{1}^{\rm pin} \right\rVert + \frac{2\pi}{\lambda_g}\left\lVert {\bm \psi}_{1}^{\rm pin} - {\bm \psi}_p \right\rVert \right) } \right\rvert^2  }{ \left\lVert {\bm \psi}_m - {\bm \psi}_{1}^{\rm pin} \right\rVert^2 } \nonumber
    \\
    &\overset{(a)}{=} \frac{ {\eta} \beta E_t e^{-\alpha\left\lVert {\bm \psi}_{1}^{\rm pin} - {\bm \psi}_p \right\rVert  }   }{ \left\lVert {\bm \psi}_m - {\bm \psi}_{1}^{\rm pin} \right\rVert^2 },
\end{align}
where $E_t={\tau}{T}{P_s}$ denotes the total transmission energy of the PS in the ET phase, $\tau$ denotes time allocation factor between ET and IT stages, $P_s$ denotes the transmission power of the PS, $\eta$ denotes energy conversion efficiency, and the step $(a)$ uses the equation $\left\lvert e^{-jx}\right\rvert = 1$.

During the IT stage $(1-\tau)T$, U exploits the harvested energy to transmit its information to the AP.
The transmission power of U can be written as
\begin{align}
    \label{Pu}
    {P_u} = \frac{E_u}{(1-\tau )T}  =  \frac{ \beta {P_t} e^{-\alpha\left\lVert {\bm \psi}_{1}^{\rm pin} - {\bm \psi}_p \right\rVert  }   }{ \left\lVert {\bm \psi}_m - {\bm \psi}_{1}^{\rm pin} \right\rVert^2 },
\end{align}  
where ${P_t}=\frac{ \tau }{1-\tau } \eta P_s $. 

Then, the received signals at AP is given by
\begin{equation}
    \label{ya}
    {y_a}=\sqrt{P_u e^{-\alpha\left\lVert  {\bm \psi}_a - {\bm \psi}_{2}^{\rm pin}  \right\rVert  }} g_2 h_2 x_u + n_a,
\end{equation}
where $x_u$ is the transmitted signal of U, satisfying $\mathbb{E} [\lvert x_u\rvert^2] = 1 $, $\mathbb{E}\left[\cdot\right] $ denotes the expectation, and $n_a$ denotes the additive white Gaussian noise (AWGN) at AP.

Therefore, the signal-to-noise ratio (SNR) of signal received at AP can be expressed as
\begin{equation}
    \label{SNRa}
    {\rm{SNR}}_{a}= \frac{\beta^2 \rho_t e^{-\alpha \left(\left\lVert  {\bm \psi}_{1}^{\rm pin} - {\bm \psi}_p  \right\rVert + \left\lVert  {\bm \psi}_a - {\bm \psi}_{2}^{\rm pin}  \right\rVert \right) } }{\left\lVert {\bm \psi}_m - {\bm \psi}_{1}^{\rm pin} \right\rVert^2 \left\lVert {\bm \psi}_{2}^{\rm pin} - {\bm \psi}_m \right\rVert^2},
\end{equation}
where ${\rho }_t=\frac{P_t}{{\sigma}^2} $, and ${{\sigma}^2}$ is the variance of the AWGN.

\begin{remark}\label{remark1}
    When employing multiple pinching antennas for beamforming on each waveguide, these antennas can be deployed in close proximity to ensure constructive interference to the user. 
    Furthermore, given the extremely small order of magnitude of the wavelength $\lambda$, the distance between pinching antennas can be negligible, and their path loss relative to the user is virtually identical. 
    Assuming that the number of pinching antennas on the waveguides at the PS and AP are respectively $N_1$ and $N_2$, and the signal radiation for multiple pinching antennas follows the equal power model \cite{wang2025modelingbeamformingoptimizationpinchingantenna}.
    Therefore, according to Eq. (\ref{SNRa}), the SNR of signal received at AP with multiple pinching antennas can be closely approximated by using the expression for the single pinching antenna scenario, i.e., ${\rm{SNR}}_{a}' = N_1 N_2 {\rm{SNR}}_{a}$.
\end{remark}

\section{Performance Analysis}

In this section, we consider two cases, i.e., practical lossy waveguide case and ideal lossless waveguide case.
Further, we derive the closed-form expressions of outage probability and ergodic rate for the two cases, respectively, in order to comprehensively evaluate the communication performance of the wireless-powered PAS when the PAs are deployed at the location where their distance from the user is minimized, i.e., ${\bm \psi}_{1}^{\rm pin}=\left( x_m,\frac{L}{2},h \right)$, and ${\bm \psi}_{2}^{\rm pin}=\left( x_m,-\frac{L}{2},h \right)$.

\subsection{Lossy Waveguide Case}
In this subsection, we consider a practical case that in-waveguide propagation loss exists, i.e., $\alpha \neq 0$.

\subsubsection{Outage Probability}
Deploying the PA at $x_m$, i.e., aligning the PA with U, ensures minimal path loss, but there is a possibility of outage in communication due to the randomness of the position of U.
Therefore, the communication outage occurs when main channel capacity falls below the predefined target codeword rate, and then the outage probability is mathematically denoted as
\begin{equation}\label{OP}
    P_{out}={\rm Pr}[(1-\tau)\log_2(1+{\rm{SNR}}_{a})<R],
\end{equation}
where $R$ denotes the predefined target codeword rate.

\begin{proposition}\label{proposition1}
    The closed-form expressions for outage probability of wireless-powered PAS with lossy waveguide under different conditions are shown in Table \ref{table1}, which is shown at the top of next page, where $\chi=\frac{\beta^2 \rho_t}{\epsilon}$, $\epsilon=2^{\frac{R}{1-\tau} }-1$, $\varpi_k=\cos(\frac{2k-1}{2K}\pi )$, $\varphi_{k,1}=\frac{b}{2}(\varpi_k+1)$, $\varphi_{k,2}=\frac{b-c}{2}\varpi_k + \frac{b+c}{2}$, $\varphi_{k,3}=\frac{D_x}{2}(\varpi_k + 1)$, $\varphi_{k,4}=\frac{D_x-c}{2}\varpi_k + \frac{D_x+c}{2}$, $a = \frac{1}{2\alpha} \ln(\frac{\chi}{h^2 L^2})$, $b = \frac{1}{2\alpha} \ln \bigg( \frac{\chi}{  \left(h^2 - \frac{L^2}{4} \right)^2 + h^2 L^2} \bigg)$, $c = \frac{1}{2\alpha} \ln \bigg( \frac{\chi}{  \left( \frac{D_y^2}{4} + h^2 - \frac{L^2}{4} \right)^2 + h^2 L^2} \bigg)$, and $K$ is the accuracy versus complexity parameter.
\end{proposition}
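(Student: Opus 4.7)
The plan is to convert the outage event into a deterministic threshold condition $x_m>F(y_m)$, and then evaluate $P_{\text{out}}$ by swapping the order of integration so that the inner probability reduces to the distribution function of the monotone branch of $F^{-1}$. First, substituting the aligned PA positions ${\bm \psi}_{1}^{\rm pin}=(x_m,L/2,h)$ and ${\bm \psi}_{2}^{\rm pin}=(x_m,-L/2,h)$ into (\ref{SNRa}) collapses both in-waveguide distances to $x_m$, so the exponential factor becomes $e^{-2\alpha x_m}$, and the free-space term factors as $[(y_m-L/2)^2+h^2]\,[(y_m+L/2)^2+h^2]$. A direct algebraic simplification (difference of squares in $y_m$) shows that this product equals $(y_m^2+h^2-L^2/4)^2+h^2 L^2$. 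Hence the outage event $(1-\tau)\log_2(1+{\rm SNR}_a)<R$, equivalently ${\rm SNR}_a<\epsilon$ with $\epsilon=2^{R/(1-\tau)}-1$, is equivalent to $x_m>F(y_m)$ with
\begin{align}
F(y_m)=\frac{1}{2\alpha}\ln\frac{\chi}{(y_m^2+h^2-L^2/4)^2+h^2 L^2}.
\end{align}
Evaluating $F$ at $y_m=0$ and $y_m=\pm D_y/2$ recovers the thresholds $b$ and $c$, while $a$ is the global maximum of $F$ over $\mathbb{R}$, attained at $y_m^2=L^2/4-h^2$ (inside the user strip only when $h<L/2$).

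Next I would exploit independence of $x_m\sim\mathcal{U}[0,D_x]$ and $y_m\sim\mathcal{U}[-D_y/2,D_y/2]$ and swap the order of integration:
\begin{align}
P_{\text{out}}=\frac{1}{D_x}\int_{0}^{D_x}\Pr\!\left[F(y_m)<x\right]dx.
\end{align}
Since $F$ is even in $y_m$ and, in the generic regime $h\geq L/2$, strictly monotone on $[0,D_y/2]$, the set $\{y_m:F(y_m)<x\}$ is the complement of a symmetric interval, so the inner probability equals $1-2G(x)/D_y$ on the image $[c,b]$ of $F$, where the decreasing-branch inverse is
\begin{align}
G(x)=\sqrt{\sqrt{\chi\,e^{-2\alpha x}-h^2 L^2}+L^2/4-h^2}.
\end{align}
Outside $[c,b]$, $\Pr[F(y_m)<x]$ is $0$ below $c$ and $1$ above $b$. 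Assembling the pieces yields the trivial rows $P_{\text{out}}=1$ (whenever $b\leq 0$) and $P_{\text{out}}=0$ (whenever $c\geq D_x$), and in every other regime expresses $P_{\text{out}}$ as an affine function of a single integral $\int_{\ell}^{u}G(x)\,dx$ with $(\ell,u)$ taken from one of $[0,b]$, $[c,b]$, $[0,D_x]$, $[c,D_x]$ — the choice being determined by the two binary questions ``is $c<0$?'' and ``is $b>D_x$?''.

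The final step is to evaluate each such integral by Gauss–Chebyshev quadrature. The affine substitution $x=\tfrac{u-\ell}{2}t+\tfrac{u+\ell}{2}$ combined with $\int_{-1}^{1}\phi(t)\,dt\approx\frac{\pi}{K}\sum_{k=1}^{K}\sqrt{1-\varpi_k^2}\,\phi(\varpi_k)$ at the Chebyshev abscissae $\varpi_k=\cos\tfrac{(2k-1)\pi}{2K}$ reproduces the finite sums in Table \ref{table1}; the translated nodes $\varphi_{k,1},\varphi_{k,2},\varphi_{k,3},\varphi_{k,4}$ are precisely the affine images of $\varpi_k$ onto $[0,b]$, $[c,b]$, $[0,D_x]$, $[c,D_x]$ respectively. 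The main obstacle will be the case enumeration and, in the sub-regime $h<L/2$, the loss of monotonicity of $F$ on $[0,D_y/2]$: there, $F$ increases from $b$ up to the interior maximum $a$ and then decreases to $c$, so for thresholds $x\in(b,a]$ the set $\{F(y_m)<x\}$ loses or gains an annular piece, which either generates an additional row in the table with $a$ replacing $b$ as the upper endpoint and a corrected coefficient in front of $G(x)/D_y$, or is excluded by geometric assumptions. Once the case is fixed, the quadrature error decays exponentially in $K$, so the finite-$K$ formulas serve as closed-form approximations with controllable accuracy.
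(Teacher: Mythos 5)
Your proposal is correct and follows essentially the same route as the paper's Appendix A: the same factorization of the free-space term into $\left(y_m^2+h^2-\frac{L^2}{4}\right)^2+h^2L^2$, the same thresholds $a,b,c$, the same CDF of a symmetric $y_m$-interval (valid only when $h^2>\frac{L^2}{4}$, which is exactly why that condition appears in every nontrivial row of Table \ref{table1}), and the same Gauss--Chebyshev quadrature with $\varphi_{k,1},\dots,\varphi_{k,4}$ as the affine images of the nodes onto $[0,b]$, $[c,b]$, $[0,D_x]$, $[c,D_x]$. The only cosmetic difference is that you integrate the conditional outage probability as a single piecewise function of $x_m$ whereas the paper splits it into four additive contributions $P_{out}^{\rm I}+\cdots+P_{out}^{\rm IV}$, and your observation about the lost monotonicity of $F$ when $h<\frac{L}{2}$ (the annular non-outage set) is precisely the regime the paper silently excludes via the table conditions.
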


\begin{table*}[t]
    \centering
    \captionsetup{font={normalsize}}
    \caption{The conditions and closed-form expressions for outage probability $P_{out}$ in lossy waveguide case}
    \label{table1}
    \resizebox{\linewidth}{!}{
    \begin{tabular}{l|l}
    \hline
    \hline
    \bf Condition & \bf The Closed-Form Expression
    \\
    \hline
    \hline
    $\left(h^2 - \frac{L^2}{4} \right)^2 + h^2L^2  \geqslant  \chi $  & 1
    \\
    \hline
    $\left(h^2 - \frac{L^2}{4} \right)^2 + h^2L^2  <  \chi$, $\left(h^2 - \frac{L^2}{4} \right)^2 + h^2L^2 >  \chi e^{-2 \alpha D_x}$, & $1 + \frac{\pi}{2\alpha D_x D_y K} \ln \left( \frac{\left( h^2 - \frac{L^2}{4} \right)^2 + h^2L^2}{\chi} \right) \sum_{k = 1}^{K} \sqrt{1-\varpi_k^2}$
    \\
    $\left(\frac{D_y^2}{4} + h^2 - \frac{L^2}{4} \right)^2 + h^2L^2 \geqslant  \chi$ and $ h^2 - \frac{L^2}{4} > 0$ & $\times \sqrt{ \sqrt{\chi e^{-2\alpha \varphi_{k,1}} - h^2 L^2} + \frac{L^2}{4} - h^2}$
    \\
    \hline
    $\left(h^2 - \frac{L^2}{4} \right)^2 + h^2L^2  <  \chi$, $\left(h^2 - \frac{L^2}{4} \right)^2 + h^2L^2 >  \chi e^{-2 \alpha D_x}$, & $ 1 + \frac{1}{2\alpha D_x}  \ln \left( \frac{\left(\frac{D_y^2}{4} + h^2 - \frac{L^2}{4} \right)^2 + h^2 L^2 }{\chi} \right)  $
    \\
    $\left(\frac{D_y^2}{4} + h^2 - \frac{L^2}{4} \right)^2 + h^2L^2 < \chi$, $ \left(\frac{D_y^2}{4} + h^2 - \frac{L^2}{4} \right)^2 + h^2L^2 > \chi e^{-2 \alpha D_x}$ & $-\frac{\pi}{2\alpha D_x D_y K} \ln \left( \frac{\left(\frac{D_y^2}{4} + h^2 - \frac{L^2}{4} \right)^2 + h^2L^2}{\left( h^2 - \frac{L^2}{4} \right)^2 + h^2L^2} \right) \sum_{k = 1}^{K} \sqrt{1-\varpi_k^2} $
    \\
    and $ h^2 - \frac{L^2}{4} > 0$ & $\times \sqrt{ \sqrt{\chi e^{-2\alpha \varphi_{k,2}} - h^2 L^2} + \frac{L^2}{4} - h^2}$
    \\
    \hline
    $\left(h^2 - \frac{L^2}{4} \right)^2 + h^2L^2 \leqslant \chi e^{-2 \alpha D_x}$, $\left(\frac{D_y^2}{4} + h^2 - \frac{L^2}{4} \right)^2 + h^2L^2 \geqslant  \chi$ & $1 - \frac{\pi}{D_y K} \sum_{k = 1}^{K} \sqrt{1-\varpi_k^2} \sqrt{ \sqrt{\chi e^{-2\alpha \varphi_{k,3}} - h^2 L^2} + \frac{L^2}{4} - h^2}$
    \\
    and $ h^2 - \frac{L^2}{4} > 0$ &
    \\
    \hline
    $\left(h^2 - \frac{L^2}{4} \right)^2 + h^2L^2 \leqslant \chi e^{-2 \alpha D_x}$, $\left(\frac{D_y^2}{4} + h^2 - \frac{L^2}{4} \right)^2 + h^2L^2 < \chi$, & $ \left( 1 + \frac{1}{2\alpha D_x}  \ln \left( \frac{\left(\frac{D_y^2}{4} + h^2 - \frac{L^2}{4} \right)^2 + h^2 L^2 }{\chi} \right)  \right) \bigg( 1 - \frac{\pi}{D_y K} $
    \\
    $ \left(\frac{D_y^2}{4} + h^2 - \frac{L^2}{4} \right)^2 + h^2L^2 > \chi e^{-2 \alpha D_x}$ and $ h^2 - \frac{L^2}{4} > 0$ & $\times \sum_{k = 1}^{K} \sqrt{1-\varpi_k^2} \sqrt{ \sqrt{\chi e^{-2\alpha \varphi_{k,4}} - h^2 L^2} + \frac{L^2}{4} - h^2} \bigg)$
    \\
    \hline
    $ \left(\frac{D_y^2}{4} + h^2 - \frac{L^2}{4} \right)^2 + h^2L^2 \leqslant \chi e^{-2 \alpha D_x}$ & 0
    \\
    \hline
    \hline
    \end{tabular}
    }
\end{table*}

\begin{proof}
    See Appendix \ref{AppendixA}.
\end{proof}

\subsubsection{Ergodic Rate}
Different the outage probability, the ergodic rate reflects the average achievable rate of the system, thus it can be used to quantify the information transmission capability of wireless-powered PAS.
In the following proposition, we derive the closed-form expression for ergodic rate of wireless-powered PAS to evaluate the potential of wireless-powered PAS in providing high transmission rates.
The ergodic rate is mathematically denoted as
\begin{equation}\label{Avg}
    R_p=\mathbb{E} \left[ (1-\tau) \log_2 \left( 1 + {\rm{SNR}}_{a} \right) \right],
\end{equation}
where $\mathbb{E} \left[ \cdot \right]$ denotes expectation.

\begin{proposition}\label{proposition2}
    The closed-form expression for ergodic rate of wireless-powered PAS with lossy waveguide is given by
    \begin{align}\label{Avg_result}
        R_p =& \frac{\pi(1-\tau)}{4 K D_x \alpha \ln2}\sum_{k = 1}^{K} \sqrt{1-\varpi_k^2} \bigg[ {\rm Li}_2\left( -\frac{\beta^2 \rho_t e^{-2\alpha D_x}}{p(u_k)} \right) \nonumber
        \\
        &- {\rm Li}_2\left( -\frac{\beta^2 \rho_t}{p(u_k)} \right) \bigg],
    \end{align}
    where ${\rm Li}_2\left( x \right) = -\int_{0}^{x} \frac{1-u}{u} \,du $ is the dilogarithm function \cite[Eq. (6.254.1)]{10.1115/1.3138251}, $p(u_k) = \left(u_k^2 + h^2 - \frac{L^2}{4} \right)^2 + h^2L^2$, and $u_k=\frac{D_y}{4}(\varpi_k + 1)$.
\end{proposition}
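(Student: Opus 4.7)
The plan is to reduce the expectation to a deterministic double integral by exploiting the prescribed alignment of the PAs, evaluate the $x_m$-integral exactly using a dilogarithm antiderivative, and then apply Gauss-Chebyshev quadrature of the first kind to the remaining $y_m$-integral.

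\textbf{Step 1: Simplify $\mathrm{SNR}_a$ under the aligned deployment.} Substituting ${\bm\psi}_{1}^{\mathrm{pin}}=(x_m,L/2,h)$ and ${\bm\psi}_{2}^{\mathrm{pin}}=(x_m,-L/2,h)$ into (6), both in-waveguide distances $\|{\bm\psi}_{1}^{\mathrm{pin}}-{\bm\psi}_p\|$ and $\|{\bm\psi}_a-{\bm\psi}_{2}^{\mathrm{pin}}\|$ collapse to $x_m$, so the attenuation factor becomes $e^{-2\alpha x_m}$. The two free-space distances squared are $(y_m-L/2)^2+h^2$ and $(y_m+L/2)^2+h^2$; their product simplifies via the identity
\begin{equation*}
\bigl[(y_m{-}L/2)^2{+}h^2\bigr]\bigl[(y_m{+}L/2)^2{+}h^2\bigr]=\bigl(y_m^2+h^2-L^2/4\bigr)^2+h^2L^2,
\end{equation*}
which I would verify by expanding both sides as polynomials in $y_m^2$. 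This yields $\mathrm{SNR}_a=\beta^2\rho_t e^{-2\alpha x_m}/p(y_m)$ with $p$ as defined in the statement.

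\textbf{Step 2: Set up and reduce the expectation.} Using the independent uniform distributions of $x_m\in[0,D_x]$ and $y_m\in[-D_y/2,D_y/2]$, and noting that $p(y_m)$ depends only on $y_m^2$, I would fold the $y$-integration onto $[0,D_y/2]$ (absorbing the factor of $2$ into the prefactor), obtaining
\begin{equation*}
R_p=\frac{2(1-\tau)}{D_x D_y \ln 2}\int_{0}^{D_y/2}\!\!\int_{0}^{D_x}\ln\!\Bigl(1+\tfrac{\beta^2\rho_t e^{-2\alpha x}}{p(y)}\Bigr)\,dx\,dy.
\end{equation*}

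\textbf{Step 3: Evaluate the inner integral exactly.} I would substitute $t=\beta^2\rho_t e^{-2\alpha x}/p(y)$, so $dx=-dt/(2\alpha t)$, converting the inner integral to $\tfrac{1}{2\alpha}\int \tfrac{\ln(1+t)}{t}\,dt$. Since $\int\!\tfrac{\ln(1+t)}{t}\,dt=-\mathrm{Li}_2(-t)+C$, evaluation between the endpoints $t|_{x=D_x}=\beta^2\rho_t e^{-2\alpha D_x}/p(y)$ and $t|_{x=0}=\beta^2\rho_t/p(y)$ gives exactly the bracketed dilogarithm difference appearing in (9).

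\textbf{Step 4: Gauss-Chebyshev quadrature on the outer integral.} The remaining $y$-integral has no elementary closed form, which is the main obstacle and the reason the result is an approximation rather than an exact identity. I would map $[0,D_y/2]$ to $[-1,1]$ via $y=(D_y/4)(v+1)$, $dy=(D_y/4)\,dv$, then insert $\sqrt{1-v^2}/\sqrt{1-v^2}$ and apply the first-kind Gauss-Chebyshev rule $\int_{-1}^{1}\!f(v)/\sqrt{1-v^2}\,dv\approx(\pi/K)\sum_{k=1}^{K}f(\varpi_k)$ with $\varpi_k=\cos\!\bigl(\tfrac{2k-1}{2K}\pi\bigr)$. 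The weights pick up the $\sqrt{1-\varpi_k^2}$ factor seen in (9), and the nodes become $u_k=(D_y/4)(\varpi_k+1)$. Collecting the prefactors $2(1-\tau)/(D_xD_y\ln 2)$, $1/(2\alpha)$, $D_y/4$, and $\pi/K$ yields the constant $\pi(1-\tau)/(4KD_x\alpha\ln 2)$ in the claimed expression, completing the derivation.
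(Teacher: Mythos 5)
Your proposal is correct and follows essentially the same route as the paper's Appendix B: simplify $\mathrm{SNR}_a$ under the aligned deployment, evaluate the $x_m$-integral exactly via the identity $\int \ln(1+t)/t\,dt=-\mathrm{Li}_2(-t)$, and apply first-kind Gauss--Chebyshev quadrature to the remaining $y_m$-integral, with the prefactors combining to $\pi(1-\tau)/(4KD_x\alpha\ln 2)$ exactly as claimed. The only (cosmetic) difference is that you substitute directly into $\ln(1+\beta^2\rho_t e^{-2\alpha x}/p)$ in one step, whereas the paper first splits this as $\ln(p+\beta^2\rho_t e^{-2\alpha x})-\ln p$ into two integrals $I_A$ and $I_B$ and then observes that the $I_B$ contributions cancel; your version reaches the same dilogarithm difference without that detour.
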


\begin{proof}
    See Appendix \ref{AppendixB}.
\end{proof}

\begin{remark}\label{remark2}
    The reason for the complexity of the closed-form expressions for both the outage probability and the ergodic rate of the wireless-powered PAS is due to the double transmission links and the randomness of U's location.
    It is observed from Proposition \ref{proposition1} and Proposition \ref{proposition2} that the outage probability is a monotonically increasing function with respect to $D_x$, $D_y$ and $\alpha$, whereas the ergodic rate is a monotonically decreasing function with respect to $D_x$, $D_y$ and $\alpha$.
    In other words, as $D_x$, $D_y$ and $\alpha$ increase, the outage probability increases while the ergodic rate decreases.
\end{remark}

\subsection{Lossless Waveguide Case}
In this subsection, we consider an ideal case that in-waveguide propagation loss is negligible, i.e., $\alpha = 0$, to obtain a lower bound on the outage probability and an upper bound on the ergodic rate of the wireless-powered PAS.

\subsubsection{Outage Probability}
\begin{proposition}\label{proposition3}
    The closed-form expressions for outage probability of wireless-powered PAS with lossless waveguide under different conditions are shown in Table \ref{table2}.
\end{proposition}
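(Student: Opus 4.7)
The plan is to specialize the outage analysis of Proposition 1 to the lossless regime $\alpha = 0$, where the exponential attenuation factors in ${\rm SNR}_a$ collapse to unity and the optimal alignment ${\bm\psi}_1^{\rm pin}=(x_m,L/2,h)$, ${\bm\psi}_2^{\rm pin}=(x_m,-L/2,h)$ makes both free-space distances $\sqrt{(y_m\mp L/2)^2+h^2}$ depend only on $y_m$, not on $x_m$. Substituting into \eqref{SNRa} and multiplying out gives
\begin{equation}
{\rm SNR}_a \;=\; \frac{\beta^2\rho_t}{\bigl(y_m^2+h^2-\tfrac{L^2}{4}\bigr)^2 + h^2L^2},
\end{equation}
so the outage event $(1-\tau)\log_2(1+{\rm SNR}_a)<R$ is equivalent to
\begin{equation}
\bigl(y_m^2+h^2-\tfrac{L^2}{4}\bigr)^2 + h^2L^2 \;>\; \chi,
\end{equation}
with $\chi=\beta^2\rho_t/\epsilon$. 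This is the main simplification: the outage event is a deterministic function of $y_m$ alone, so the probability reduces to a one-dimensional problem over the uniform distribution of $y_m$ on $[-D_y/2,D_y/2]$, and the integration over $x_1^{\rm pin}$ that produced the Gauss--Chebyshev sums and the six regimes of Proposition 1 disappears entirely.

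Next, I would invoke the standing assumption $h^2-L^2/4>0$ inherited from Proposition 1, under which the left-hand side is strictly increasing in $y_m^2$. Its minimum (at $y_m=0$) equals $(h^2-L^2/4)^2+h^2L^2$ and its maximum (at $|y_m|=D_y/2$) equals $(D_y^2/4+h^2-L^2/4)^2+h^2L^2$. Comparing $\chi$ with these two thresholds yields exactly three cases: if $\chi\le(h^2-L^2/4)^2+h^2L^2$ the inequality holds for every admissible $y_m$ and $P_{out}=1$; if $\chi\ge(D_y^2/4+h^2-L^2/4)^2+h^2L^2$ it never holds and $P_{out}=0$; otherwise, solving the quadratic-in-$y_m^2$ inequality gives the critical threshold $|y_m|>y_*$ with
\begin{equation}
y_* \;=\; \sqrt{\sqrt{\chi-h^2L^2}+\tfrac{L^2}{4}-h^2}.
\end{equation}

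Finally, using that $y_m$ is uniform on an interval of length $D_y$ symmetric about the origin, $P[|y_m|>y_*]=1-2y_*/D_y$, yielding the intermediate-regime expression
\begin{equation}
P_{out} \;=\; 1-\frac{2}{D_y}\sqrt{\sqrt{\chi-h^2L^2}+\tfrac{L^2}{4}-h^2}.
\end{equation}
I would then just verify that the three threshold inequalities tabulated in Table~\ref{table2} are the $\alpha\to 0$ limits of the corresponding conditions in Table~\ref{table1} (the in-waveguide attenuation factor $e^{-2\alpha D_x}\to 1$ collapses the six cases of the lossy table into three), and that the intermediate formula is the $\alpha\to 0$ limit of the Gauss--Chebyshev sums, which is immediate because the integrand of the inner integral in Appendix A becomes constant in $x_1^{\rm pin}$. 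Since there are no oscillatory integrals or transcendental inversions required, I do not expect a serious obstacle; the only care point is confirming that $L^2/4-h^2+\sqrt{\chi-h^2L^2}\ge 0$ automatically in the intermediate regime (it does, because that regime forces $(h^2-L^2/4)^2<\chi-h^2L^2$), so that $y_*$ is well defined and real.
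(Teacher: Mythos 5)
Your proof is correct and follows essentially the same route as the paper's Appendix C: reduce the outage event to the one-dimensional condition $\left(y_m^2+h^2-\frac{L^2}{4}\right)^2+h^2L^2>\chi$, compare $\chi$ against the minimum (at $y_m=0$) and maximum (at $|y_m|=\frac{D_y}{2}$) of that expression, and use the uniform law of $y_m$ to obtain $1-\frac{2}{D_y}y_*$ in the intermediate regime. The only cosmetic difference is that the paper organizes this as four cases mirroring Appendix A (via the CDF $F_Z(z)$) rather than your monotonicity-in-$y_m^2$ argument, and, like you, it only resolves the intermediate regime under the explicit condition $h^2-\frac{L^2}{4}>0$ (which in the table is a stated condition of that row rather than a global standing assumption).
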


\begin{table*}[t]
    \centering
    \captionsetup{font={normalsize}}
    \caption{The conditions and closed-form expressions for outage probability $P_{out}$ in lossless waveguide case}
    \label{table2}
    \begin{tabular}{l|l}
    \hline
    \hline
    \bf Condition & \bf The Closed-Form Expression
    \\
    \hline
    \hline
    $  \left(h^2 - \frac{L^2}{4} \right)^2 + h^2L^2  >  \chi $  & 1
    \\
    \hline
    $ h^2 - \frac{L^2}{4} > 0$, $\left(h^2 - \frac{L^2}{4} \right)^2 + h^2L^2  \leqslant   \chi $ and $\left(\frac{D_y^2}{4} + h^2 - \frac{L^2}{4} \right)^2 + h^2L^2 \geqslant  \chi$ & $1 - \frac{2}{D_y} \sqrt{ \sqrt{\chi - h^2 L^2} + \frac{L^2}{4} - h^2} $
    \\
    \hline
    $ \left(\frac{D_y^2}{4} + h^2 - \frac{L^2}{4} \right)^2 + h^2L^2 < \chi $ & 0
    \\
    \hline
    \hline
    \end{tabular}
\end{table*}

\begin{proof}
    See Appendix \ref{AppendixC}.
\end{proof}

\subsubsection{Ergodic Rate}

\begin{proposition}\label{proposition4}
    The closed-form expression for ergodic rate of wireless-powered PAS with lossless waveguide is given by
    \begin{align}\label{Avg_result_lossless}
        R_p =& \frac{2(1-\tau)}{D_y \ln2} \Big( r_1 \ln \left( r_1^2 + 4s^2 \right) + r_2 \ln \left( r_2^2 + 4s^2 \right) \nonumber
        \\
        &- r_3 \ln \left( r_3^2 + 4h^2 \right) - r_4 \ln \left( r_4^2 + 4h^2 \right) \nonumber
        \\
        &+ 4s \left( \arctan \left(  \frac{r_1}{2s} \right) + \arctan \left(  \frac{r_2}{2s} \right) \right) \nonumber
        \\
        &- 4h \left( \arctan \left(  \frac{r_3}{2h} \right) + \arctan \left(  \frac{r_4}{2h} \right) \right) \Big),
    \end{align}
    where $r_1 = D_y + v$, $r_2 = D_y - v$, $r_3 = D_y + L$, $r_4 = D_y - L$, $v=\sqrt{2\left( u - h^2 + \frac{L^2}{4} \right)}$, $u = \sqrt{ \left( h^2 - \frac{L^2}{4} \right)^2 + h^2L^2 + \beta^2 \rho_t }$, and $s = \sqrt{u - \frac{v^2}{4}}$.
\end{proposition}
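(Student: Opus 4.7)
The plan is to exploit the major simplification that occurs when $\alpha=0$ and the pinching antennas are aligned with U. Substituting ${\bm \psi}_{1}^{\rm pin}=(x_m,L/2,h)$ and ${\bm \psi}_{2}^{\rm pin}=(x_m,-L/2,h)$ into (6), the $x$-coordinate differences vanish and the waveguide attenuation factor becomes unity, so the SNR reduces to
\[
{\rm SNR}_a = \frac{\beta^2 \rho_t}{q(y_m)}, \qquad q(y):=\bigl((y-L/2)^2+h^2\bigr)\bigl((y+L/2)^2+h^2\bigr),
\]
which is independent of $x_m$. The expectation in (8) therefore collapses to a single integral in $y_m$, namely
\[
R_p = \frac{1-\tau}{D_y \ln 2}\int_{-D_y/2}^{D_y/2}\bigl[\ln(q(y)+\beta^2\rho_t)-\ln q(y)\bigr]\,dy.
\]

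The hard part is the logarithm of $q(y)+\beta^2\rho_t$, whose argument is a biquadratic in $y$ with no real roots. The key step is to rewrite $q(y)=(y^2+h^2-L^2/4)^2+h^2L^2$, so that
\[
q(y)+\beta^2\rho_t = y^4+2\bigl(h^2-L^2/4\bigr)y^2+u^2,
\]
with $u$ exactly the constant defined in the statement (note $u^2=(h^2-L^2/4)^2+h^2L^2+\beta^2\rho_t$). Completing the square yields $(y^2+u)^2-2(u-h^2+L^2/4)y^2=(y^2+u)^2-(vy)^2$; a difference-of-squares factorisation followed by a further completion of the square in each quadratic factor then delivers
\[
q(y)+\beta^2\rho_t = \bigl((y+v/2)^2+s^2\bigr)\bigl((y-v/2)^2+s^2\bigr).
\]
The denominator $q(y)$ already has the analogous two-quadratic form with $(v,s)$ replaced by $(L,h)$.

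With both sides factored, the integrand becomes a sum of four terms of the shape $\ln((y\pm c)^2+d^2)$. I would evaluate each via the standard antiderivative
\[
\int \ln(w^2+d^2)\,dw = w\ln(w^2+d^2)-2w+2d\arctan(w/d),
\]
after the shift $w=y\pm c$. Evaluating each contribution at $y=\pm D_y/2$ naturally produces the combinations $D_y\pm v$ and $D_y\pm L$, i.e.\ exactly $r_1,r_2,r_3,r_4$. A $y\to -y$ reflection shows that the ``$+$'' and ``$-$'' shifts for any given $c$ yield identical integrals, so the four contributions pair up and double, producing the overall factor in front of (10).

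The final bookkeeping verifies two cancellations. First, the constant contributions $-2w$ evaluated at the endpoints combine to multiples of $D_y$ that cancel between numerator and denominator because $r_1+r_2=r_3+r_4=2D_y$. Second, the $-\ln 4$ terms generated by rewriting $\ln\bigl((r_i^2+4d^2)/4\bigr)=\ln(r_i^2+4d^2)-\ln 4$ cancel for the same reason. What remains is precisely the closed form displayed in (10).
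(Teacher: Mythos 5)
Your route is essentially the paper's route: reduce the SNR to a function of $y_m$ alone, write the expectation as the difference of two log-integrals, factor both quartics into two shifted quadratics using exactly the same constants $u$, $v$, $s$ (the paper does this in Appendix D via $I_B$, $I_C$ and integration by parts, which is equivalent to your tabulated antiderivative $\int\ln(w^2+d^2)\,dw=w\ln(w^2+d^2)-2w+2d\arctan(w/d)$), and let the endpoint evaluations produce $r_1,\dots,r_4$. The factorization $q(y)+\beta^2\rho_t=\bigl((y+v/2)^2+s^2\bigr)\bigl((y-v/2)^2+s^2\bigr)$ and the cancellations of the $-2w$ and $\ln 4$ contributions are all correct.

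The one genuine gap is the final factor-of-two bookkeeping. The reflection argument correctly shows that the two shifted integrals are equal, but this does not ``produce the overall factor in front of'' \eqref{Avg_result_lossless}: summing the two equal contributions is exactly what is required to account for the two quadratic factors of the quartic, and no additional factor of $2$ arises. Carrying your own antiderivative through, $\int_{-D_y/2}^{D_y/2}\ln\bigl(q(y)+\beta^2\rho_t\bigr)\,dy = r_1\ln(r_1^2+4s^2)+r_2\ln(r_2^2+4s^2)-2D_y\ln 4-4D_y+4s\bigl(\arctan(r_1/(2s))+\arctan(r_2/(2s))\bigr)$, and subtracting the analogous denominator expression leaves $\frac{1-\tau}{D_y\ln 2}$ times the bracket of \eqref{Avg_result_lossless}, i.e.\ half of the stated result. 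A numerical spot check ($h=1$, $L=0$, $D_y=2$, $\tau=0$, $\beta^2\rho_t=3$ gives $R_p\approx 1.50$ by direct quadrature, while \eqref{Avg_result_lossless} evaluates to $\approx 3.00$) confirms that the halved value is the correct one; in fact the paper's own $I_B$ and $I_C$ from Appendix D, substituted into $R_p=\frac{2(1-\tau)}{D_y\ln 2}(I_C-I_B)$, also yield the halved value, so the inconsistency sits in the proposition's prefactor rather than in your method. You should therefore not assert that your computation lands ``precisely'' on \eqref{Avg_result_lossless}; report the coefficient your steps actually deliver and flag the mismatch.
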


\begin{proof}
    See Appendix \ref{AppendixD}.
\end{proof}

\begin{remark}\label{remark3}
    It can be concluded from Proposition \ref{proposition3} and Proposition \ref{proposition4} that the outage probability and ergodic rate of the wireless-powered PAS in lossless waveguide case are independent of the coordinate of U on the $x$-axis $x_m$.
    This is because the PA is deployed at $x_m$ and $\alpha = 0$, which leads to the absence of a variable $x_m$ in the SNR of signal received at AP.
    Therefore, the conditions and closed-form expressions for the outage probability and ergodic rate become concise.
\end{remark}

\begin{remark}\label{remark4}
    For the multiple pinching antenna scenario, simply replace $\rho_t$ with $\rho_t N_1 N_2$ yields the closed-form expressions for outage probability and ergodic rate under both practical lossy waveguide and ideal lossless waveguide cases.
\end{remark}

\section{Position Analysis of Waveguides and User}
\subsection{Optimal Deployment Analysis of Waveguides}
In the previous analysis, we assume that the distance between waveguides of PS and AP $L$ is constant to facilitate the derivation of the closed-form expressions.
However, in the practical deployment, as shown in Eqs. (\ref{SNRa}) and (\ref{OP2}), the SNR of signal received at AP exhibits a non-linear change with respect to $L$.
Therefore, there is an optimal distance $L^{*}$ to maximize the SNR of signal received at AP.

\begin{proposition}\label{proposition5}
    The SNR of signal received at AP in wireless-powered PAS is maximized when the optimal distance between the two waveguides is given by
        \begin{equation}\label{Lopt}
        L^{*} = \left\{
            \begin{aligned}
            &0,\quad y_m \in  (-h,h) \\
            &2\sqrt{y_m^2 - h^2}, \quad y_m \in \left[-\frac{D_y}{2},-h\right] \cup  \left[h,\frac{D_y}{2}\right] \\
            \end{aligned}
        \right.
    \end{equation}
\end{proposition}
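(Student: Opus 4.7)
The plan is to reduce the optimization to a single-variable calculus problem on $L^2$. First I observe that under the PA alignment ${\bm \psi}_{1}^{\rm pin}=(x_m,L/2,h)$ and ${\bm \psi}_{2}^{\rm pin}=(x_m,-L/2,h)$, the in-waveguide segments satisfy $\|{\bm \psi}_1^{\rm pin}-{\bm \psi}_p\|=\|{\bm \psi}_a-{\bm \psi}_2^{\rm pin}\|=x_m$, so the numerator $\beta^2\rho_t e^{-2\alpha x_m}$ of \eqref{SNRa} is independent of $L$. Maximizing the SNR is therefore equivalent to minimizing the denominator
\begin{equation}
f(L) = \left[\left(y_m-\tfrac{L}{2}\right)^2+h^2\right]\left[\left(y_m+\tfrac{L}{2}\right)^2+h^2\right]
\end{equation}
over $L\ge 0$.

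Next I would apply the identity $(A-B)(A+B)=A^2-B^2$ with $A=y_m^2+L^2/4+h^2$ and $B=y_m L$ to rewrite
\begin{equation}
f(L) = \left(y_m^2+\tfrac{L^2}{4}+h^2\right)^2 - y_m^2 L^2,
\end{equation}
which is a quadratic function of $u=L^2$. Setting the derivative with respect to $u$ to zero gives the interior critical point $u^{*}=4(y_m^2-h^2)$, and since the second derivative equals $1/8>0$, this is a global minimum of $f$ viewed as a function of $u\in\mathbb{R}$.

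Finally I would perform the case split dictated by the sign of $y_m^2-h^2$. When $|y_m|\ge h$, i.e. $y_m\in[-D_y/2,-h]\cup[h,D_y/2]$, the critical point $u^{*}\ge 0$ is feasible, giving $L^{*}=2\sqrt{y_m^2-h^2}$. When $y_m\in(-h,h)$, the critical point $u^{*}<0$ is infeasible; evaluating $df/du\big|_{u=0}=(h^2-y_m^2)/2>0$ shows $f$ is strictly increasing on $[0,\infty)$, so the constrained minimum is attained at the boundary $L^{*}=0$. Combining the two regimes yields the piecewise expression in \eqref{Lopt}.

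The main obstacle is recognizing and justifying this piecewise behavior: a naive stationarity argument would produce an imaginary or negative ``optimum'' whenever $|y_m|<h$, and one must carefully invoke the feasibility constraint $L\ge 0$ together with the monotonicity of $f$ on $[0,\infty)$ in that regime to conclude that $L^{*}=0$ is optimal there. The remaining steps (the factoring trick and the $u=L^2$ substitution) are routine once this is in place.
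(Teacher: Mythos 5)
Your proof is correct and takes essentially the same approach as the paper: both reduce the problem to minimizing the denominator $f(L)$ over $L\ge 0$ and split cases according to the sign of $y_m^2-h^2$. Your substitution $u=L^2$, which turns $f$ into a convex quadratic, is only a stylistic variant, though it makes the global-minimum and boundary ($L^{*}=0$) arguments slightly cleaner than the paper's direct sign analysis of $f'(L)=\left(\frac{L^2}{4}+h^2-y_m^2\right)L$.
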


\begin{proof}
    As shown in Eqs. (\ref{SNRa}) and (\ref{OP2}), we have
    \begin{align}\label{SNRa2}
        {\rm{SNR}}_{a} = \frac{\beta^2 \rho_t e^{-2\alpha x_m}}{ \left( y_m^2 + h^2 - \frac{L^2}{4} \right)^2 + h^2 L^2 }.
    \end{align}

    It can be seen from (\ref{SNRa2}) that only the denominator in ${\rm{SNR}}_{a}$ contains the terms related to $L$.
    Let $f(L)=\left( y_m^2 + h^2 - \frac{L^2}{4} \right)^2 + h^2 L^2$, and then we can obtain the maximum ${\rm{SNR}}_{a}$ by minimizing $f(L)$.
    Therefore, the original problem is transformed into finding the value of $L$ corresponding to the minimum value of $f(L)$.

    We note that the first order derivative function of $f(L)$ is given by
    \begin{align}\label{fL_first_order}
        f'(L) = \left ( \frac{L^2}{4} + h^2 - y_m^2  \right ) L.
    \end{align}

    Since $L$ represents the distance between two waveguides, it can not be negative, i.e., $L \geqslant 0$.
    When $h^2 - y_m^2 > 0$, $f'(L) \geqslant 0$.
    Hence, $f(L)$ is a monotonically increasing function with respect to $L$, and the minimum value of $f(L)$ is obtained when $L = 0$, i.e., $L^{*} = 0$.
    In addition, when $h^2 - y_m^2 \leqslant 0$, it can be observed from (\ref{fL_first_order}) that when $L$ is small and close to $0$, $f'(L)<0$, and when $L$ is large and close to $D_y$, $f'(L)>0$.
    Thus, we have $f'(L^{*}) =0$, and $L^{*} = 2\sqrt{y_m^2 - h^2}$.
    The proof of Proposition \ref{proposition5} is completed.
\end{proof}

\begin{remark}\label{remark5}
    It is observed from Eq. (\ref{Lopt}) that when the user's position is determined, the SNR of signal received at AP can be maximized by adjusting the distance between waveguides of PS and AP.
    In particular, for the user located near the $x$-axis, the optimal distance between waveguides is equal to 0, i.e., PS and AP are integrated into a HAP.
    The upper limit of the distance between waveguides of PS and AP is $\sqrt{D_y^2 - 4h^2}$, indicating that a larger distance on $L$ is meaningless and only degrades the received signal of AP.
    The above conclusion depends on the joint effect of ET and IT losses.
    Intuitively, the optimal deployment position for the waveguide is at $L=0$, i.e., HAP.
    When the user is close to the $x$-axis, both ET and IT losses are relatively small, and the waveguide does not alter its optimal position.
    However, when the user moves away from the $x$-axis, the total ET and IT losses generated by deploying the waveguides for PS and AP separately are smaller than those for HAP.
\end{remark}

\subsection{Position Analysis of User}
Similar to the optimal deployment analysis of waveguides, the instantaneous SNR of signal received at AP also exhibits a non-linear variation about the user's position.
By fixing the distance between waveguides of PS and AP $L$, we can reveal the optimal position of the user for the SNR at AP.

\begin{figure}[t]
    \centering
    \includegraphics[width=\linewidth]{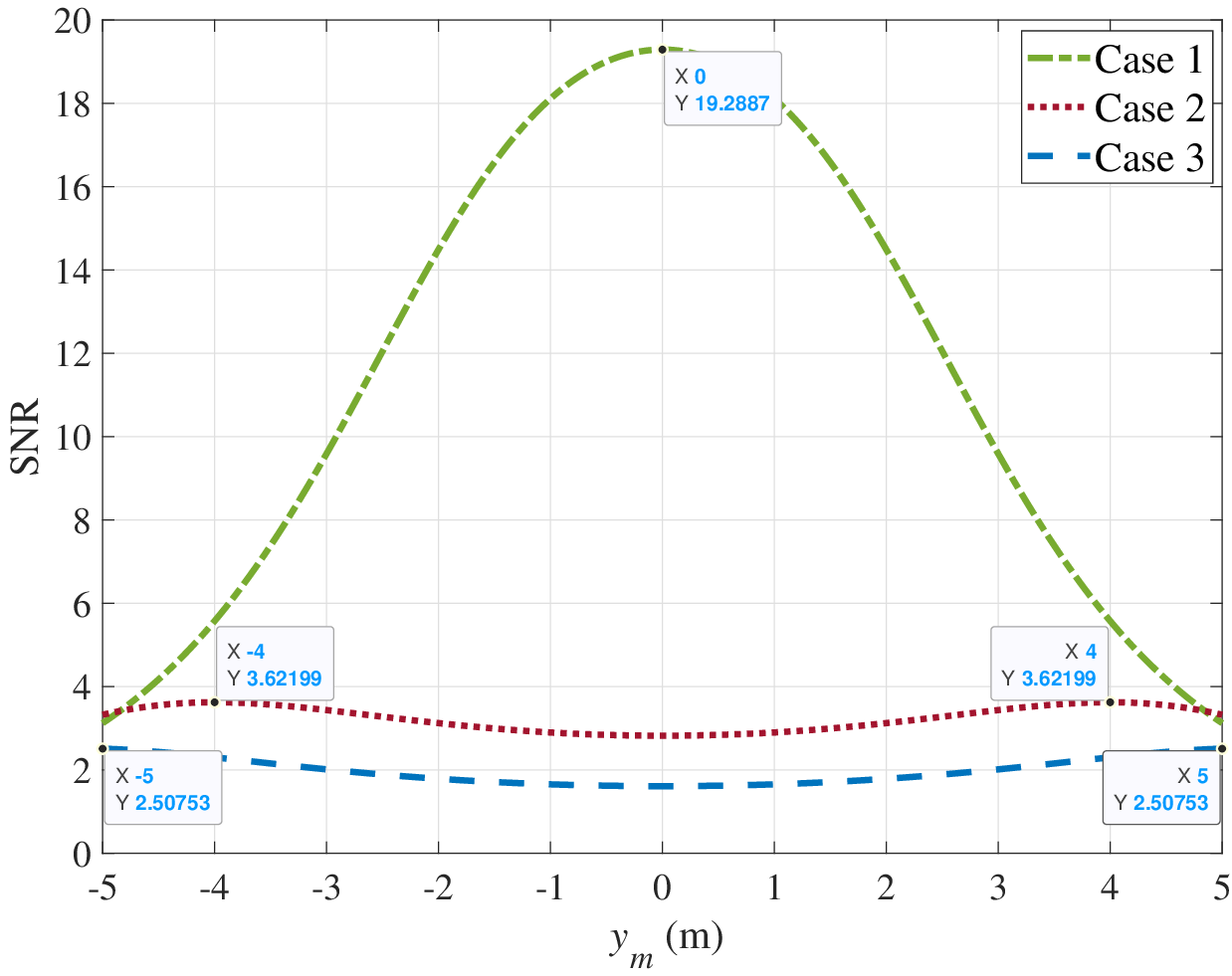}
    \caption{The SNR of signal received at AP versus $y_m$ for different cases, where $D_y=10$ m, $h=3$ m and $L=4$ m in Case 1, $h=3$ m and $L=10$ m in Case 2, and $h=3$ m and $L=12$ m in Case 3.}
    \label{SNR_ym}
\end{figure}

\begin{proposition}\label{proposition6}
    The SNR of signal received at AP in wireless-powered PAS is maximized when the position of user is given by
        \begin{align}\label{yopt}
            (x_m^{*},y_m^{*} )= \left\{
            \begin{aligned}
            &(0,0), &L \in  (0,2h)\\
            &\left(0,\pm \sqrt{\frac{L^2}{4} - h^2}\right), &L \in  \left[2h,\sqrt{D_y^2 + 4h^2} \right] \\
            &\left(0,\pm  \frac{D_y}{2}\right), &L \in  \left(\sqrt{D_y^2 + 4h^2},\infty \right) \\
            \end{aligned}
        \right.
    \end{align}
\end{proposition}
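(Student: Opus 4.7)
The plan is to decouple the optimization by first observing that, in the expression for $\mathrm{SNR}_a$ from Eq.\ (\ref{SNRa2}), the variable $x_m$ appears only through the factor $e^{-2\alpha x_m}$, which is monotonically decreasing over $[0, D_x]$. Hence, regardless of $y_m$, the maximum is attained at $x_m^{*}=0$, and the problem reduces to minimizing the denominator, viewed as a univariate function of $y_m$ on $[-D_y/2, D_y/2]$.

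To this end, I would define $g(y_m) = \left( y_m^2 + h^2 - L^2/4 \right)^2 + h^2 L^2$ and exploit the fact that $g$ is even in $y_m$, reducing the analysis to $y_m \in [0, D_y/2]$. Differentiating gives $g'(y_m) = 4 y_m \left( y_m^2 + h^2 - L^2/4 \right)$, so the interior critical points are $y_m=0$ and, whenever $L^2/4 \geq h^2$, also $y_m = \sqrt{L^2/4 - h^2}$ provided this value lies in the feasible set. Direct evaluation gives $g(0) = (h^2 - L^2/4)^2 + h^2 L^2$ and $g\bigl(\sqrt{L^2/4 - h^2}\bigr) = h^2 L^2$, so the second critical point, when available, always dominates the first.

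The three regimes in Eq.\ (\ref{yopt}) then follow by case analysis. When $L < 2h$, the factor $y_m^2 + h^2 - L^2/4$ is strictly positive on $(0, D_y/2]$, so $g$ is strictly increasing there; the minimum is therefore at $y_m^{*}=0$. When $2h \leq L \leq \sqrt{D_y^2 + 4h^2}$, the critical point $\sqrt{L^2/4 - h^2}$ lies in $[0, D_y/2]$ and yields the global minimum value $h^2 L^2$; invoking the evenness of $g$ gives the two optima $y_m^{*} = \pm\sqrt{L^2/4 - h^2}$. When $L > \sqrt{D_y^2 + 4h^2}$, the interior critical point leaves the feasible region and $g'(y_m) < 0$ throughout $(0, D_y/2)$, so $g$ is strictly decreasing on $[0, D_y/2]$ and the minimum is pushed to the boundary $y_m^{*}= D_y/2$, with $-D_y/2$ obtained by symmetry.

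The only subtlety is checking consistency at the transitional values $L = 2h$ and $L = \sqrt{D_y^2 + 4h^2}$: at these boundaries the formula $\sqrt{L^2/4 - h^2}$ coincides with $0$ and $D_y/2$ respectively, so the three cases agree on the seams and the partition is well defined. The rest is a routine single-variable minimization, which I expect to be the least demanding part of the argument; the main obstacle is simply bookkeeping the three regimes and verifying that the candidate obtained from the first-order condition is feasible before declaring it optimal.
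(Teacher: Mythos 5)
Your proposal is correct and follows essentially the same route as the paper's proof: fix $x_m^{*}=0$ from the monotonicity of $e^{-2\alpha x_m}$, then minimize $f(y_m)=\bigl(y_m^2+h^2-\tfrac{L^2}{4}\bigr)^2+h^2L^2$ via the first-order condition $f'(y_m)=4y_m\bigl(y_m^2+h^2-\tfrac{L^2}{4}\bigr)$ and the same three-case split on $L$ relative to $2h$ and $\sqrt{D_y^2+4h^2}$. The only additions beyond the paper's argument are the explicit comparison $f(0)$ versus $f\bigl(\sqrt{L^2/4-h^2}\bigr)$ and the seam-consistency check, both of which are harmless refinements rather than a different method.
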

\begin{proof}
    According to Eq. (\ref{SNRa2}), it is readily apparent that the SNR of signal received at AP in wireless-powered PAS is maximized when $x_m=0$, i.e., $x_m^{*}=0$, and only the denominator contains the terms related to $y_m$.
    Similar to the proof of Proposition \ref{proposition5}, we can maximize the ${\rm{SNR}}_{a}$ by minimizing $f(y_m)=\left( y_m^2 + h^2 - \frac{L^2}{4} \right)^2 + h^2 L^2$.
    
    Taking the derivative with respect to $f(y_m)$, we have
    \begin{align}\label{fym_first_order}
        f'(y_m) = 4 y_m \left( y_m^2 + h^2 - \frac{L^2}{4} \right).
    \end{align}

    \textbf{Case 1:} When $h^2 - \frac{L^2}{4} > 0$, $f'(y_m)$ shares the same sign relationship with $y_m$.
    Since $y_m \in \left[-\frac{D_y}{2},\frac{D_y}{2}\right]$, $f(y_m)$ first decreases and then increases.
    Thus, $f(y_m)$ has a minimum value at $y_m=0$, i.e., $y_m^{*}=0$.
    
    \textbf{Case 2:} When $h^2 - \frac{L^2}{4} \leqslant  0$ and $\frac{D_y}{2} \geqslant \sqrt{\frac{L^2}{4} - h^2}$, we set $f'(y_m)=0$ to obtain the solutions $y_{1} = -\sqrt{\frac{L^2}{4} - h^2}$ and $y_{2} = \sqrt{\frac{L^2}{4} - h^2}$.
    When $y_m \in \left(-\frac{D_y}{2},y_1 \right) \cup \left(0,y_2 \right)$, $f(y_m)$ is a monotonically decreasing function, whereas when $y_m \in \left(y_1,0 \right) \cup \left(y_2,\frac{D_y}{2} \right)$, $f(y_m)$ is a monotonically increasing function.
    Therefore, we find that $f(y_1) = f(y_2) = h^2L^2$, which are the minimum value of $f(y_m)$, i.e., $y_m^{*}=\pm \sqrt{\frac{L^2}{4} - h^2}$.

    \textbf{Case 3:} When $h^2 - \frac{L^2}{4} \leqslant  0$ and $\frac{D_y}{2} < \sqrt{\frac{L^2}{4} - h^2}$, $f(y_m)$ first increases and then decreases, the minimum value of $f(y_m)$ is $f(-\frac{D_y}{2}) = f(\frac{D_y}{2})=\left( \frac{D_y^2}{4} + h^2 - \frac{L^2}{4} \right)^2 + h^2 L^2$, i.e., $y_m^{*}=\pm \frac{D_y}{2}$.
    The proof of Proposition \ref{proposition6} is completed.
\end{proof}

\begin{remark}\label{remark6}
It can be concluded from Eq. (\ref{yopt}) that the optimal positions for user occur in pairs and are symmetrical about the $x$-axis, which can be verified in Fig. \ref{SNR_ym}.
This is because the PS and AP waveguides are deployed at the same height and at equal distances from the $x$-axis, and the power attenuation effect resulting from path loss in both energy transfer and information transmission is identical.

\end{remark}

\section{Simulation Results and Discussion}
In this section, we evaluate the communication performance of the wireless-powered PAS, and employ Monte Carlo simulations with $10^6$ realizations to validate the theoretical analysis.
To validate the performance improvement of the PA in terms of system reliability, a traditional WPC system without the PA is considered as benchmark.
Unless otherwise stated, the carrier frequency $f_c = 2.7$ GHz, the effective refractive index $n_{\rm eff} = 1.4$, the absorption coefficient $\alpha = 0.01$, and the dimensions of user area $D_x = D_y = 10$ m.
Without loss of generality, the simulation parameters are set as $T=1$, $\tau=0.4$, $\eta =0.8$, $R=2.5$ BPCU, $h=3$ m, $L=4$ m, $\sigma ^2=-90$ dBm, and $K=50$.

\begin{figure}[t]
    \centering
    \includegraphics[width=\linewidth]{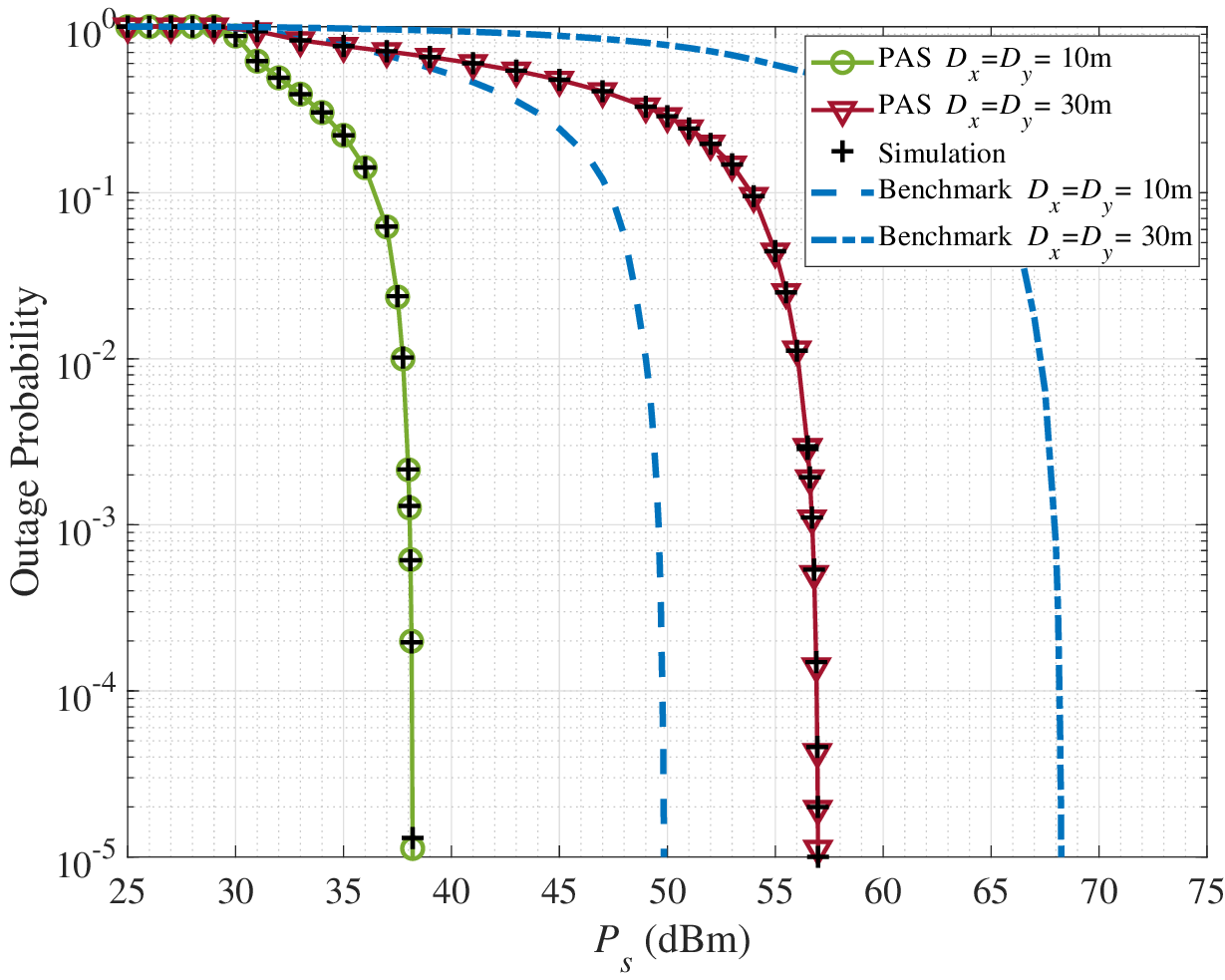}
    \caption{Outage probability versus $P_s$ for different $D_x$ and $D_y$.}
    \label{OP_Ps_DxDy}
\end{figure}

Fig. \ref{OP_Ps_DxDy} shows that the outage probability versus the transmission power $P_s$ for different dimensions of user area $D_x$ and $D_y$.
It is observed from the figure that the closed-form expressions of outage probability match well with the simulation results, verifying the correctness of the theoretical analysis of outage probability.
As $P_s$ increases, the outage probability decreases significantly.
In addition, under the same dimensions of user area, the wireless-powered PAS demonstrates an evident performance advantage over the traditional WPC system.
This is attributed to the fact that the PA can be flexibly moved along the dielectric waveguide to adjust its position to minimize the distance from U, effectively reducing path loss and establishing a strong LoS communication link.
However, for achieving an outage probability of $10^{-5}$, the transmission power gain required for the wireless-powered PAS is almost the same as that of the traditional WPC system when the user area dimensions are increased from $10$ m to $30$ m, which is different from the results of Fig. 2 in \cite{10976621}.
The reason for this conclusion is that the waveguides of the PS and AP are arranged on both sides of the user area, and the unique double transmission links in WPC increase the propagation loss within the waveguide and in the free space, which diminishes the performance advantage provided by the PA.

\begin{figure}[t]
    
    \centering
    \subfloat[$D_x=D_y=10$ m]{\includegraphics[width=\linewidth]{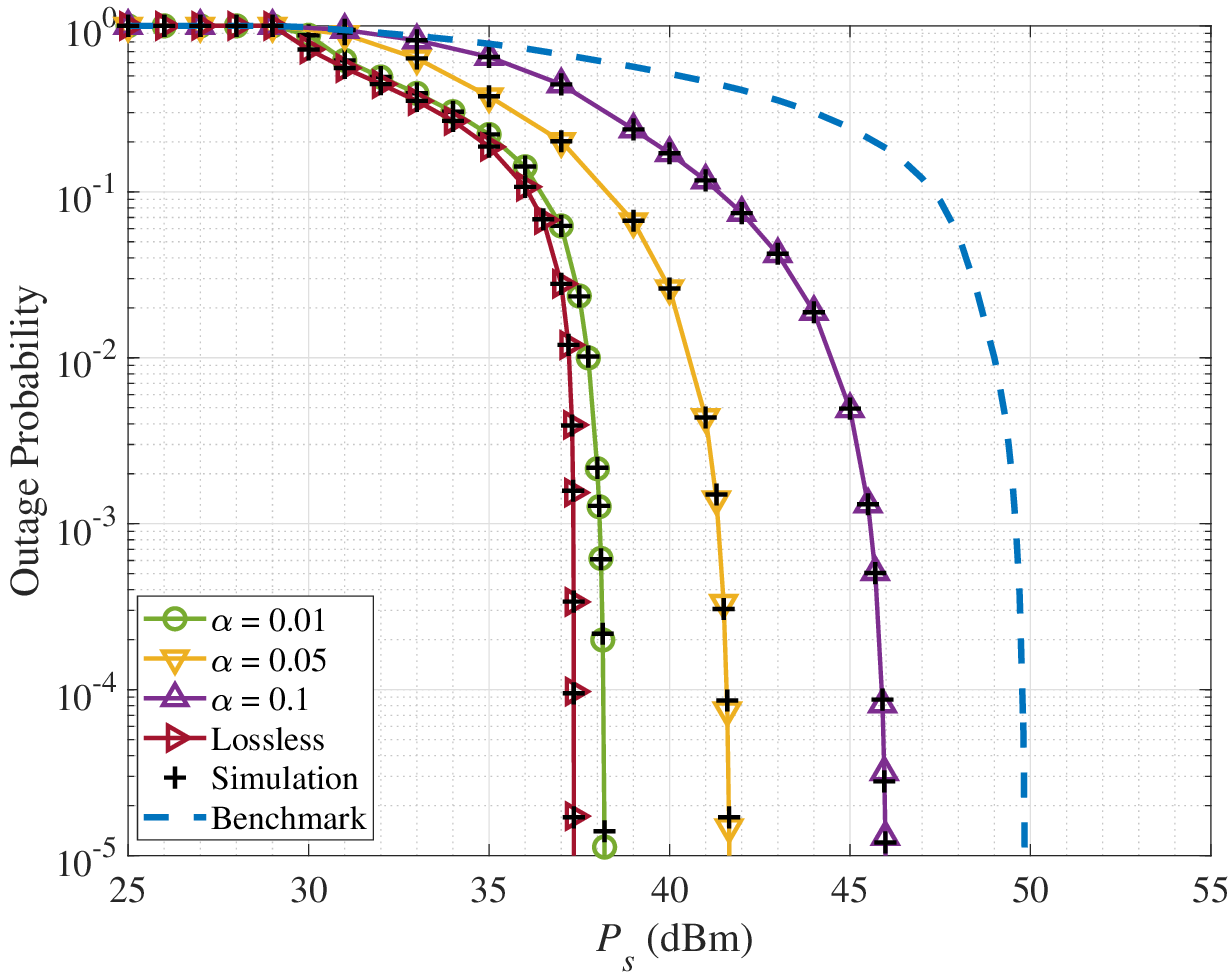}}\\
    \subfloat[$D_x=D_y=30$ m]{\includegraphics[width=\linewidth]{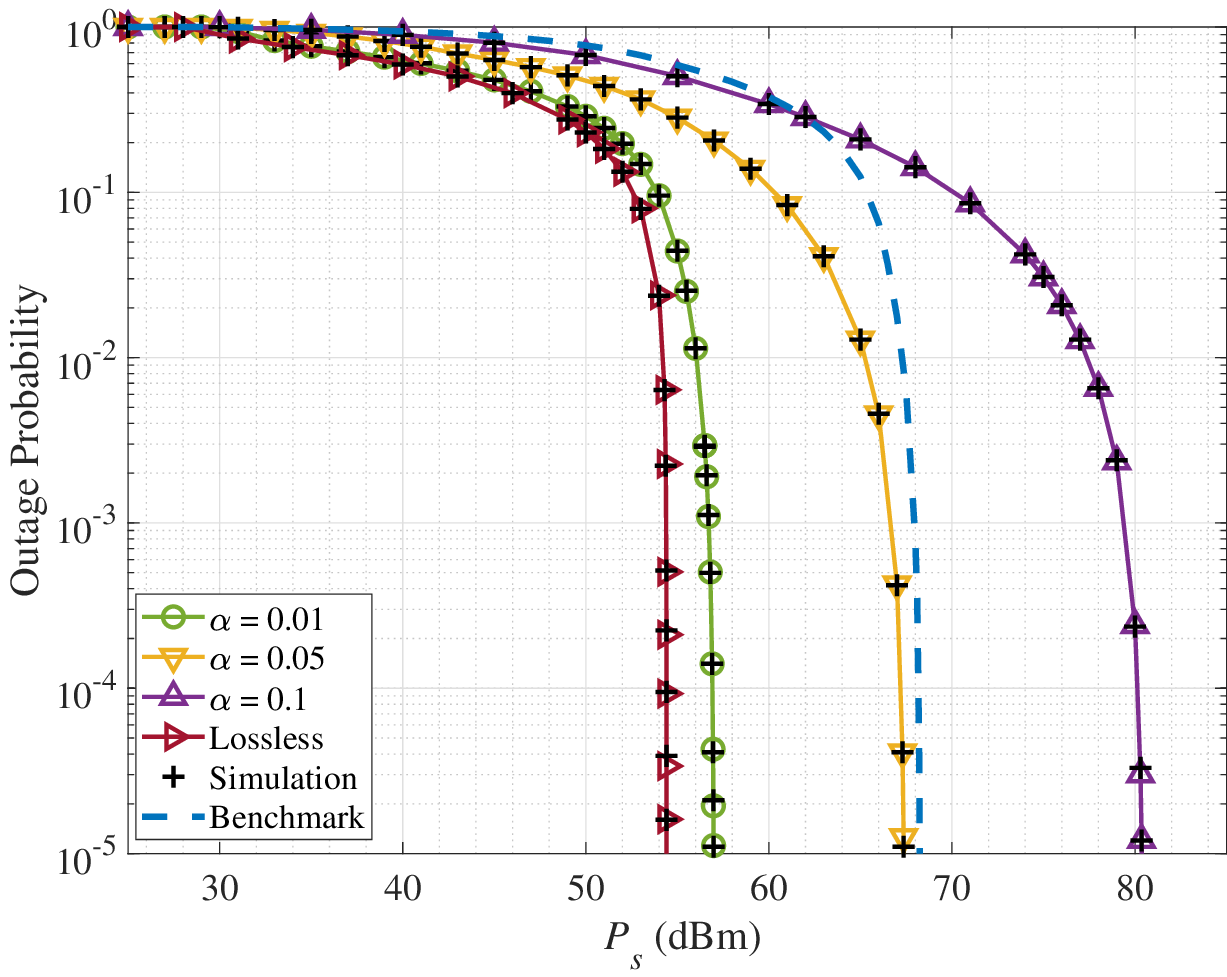}}
    
    \caption{Outage probability versus $P_s$ for different $\alpha$, where (a) $D_x=D_y=10$ m, and (b) $D_x=D_y=30$ m.}
    \label{OP_Ps_alpha}
\end{figure}

Fig. \ref{OP_Ps_alpha} illustrates that the outage probability versus the transmission power $P_s$ for different absorption coefficient $\alpha$, including the ideal case of a lossless waveguide, i.e., $\alpha=0$.
Fig. \ref{OP_Ps_alpha}(a) and Fig. \ref{OP_Ps_alpha}(b) represent the cases of $D_x = D_y = 10$ m and $D_x = D_y = 30$ m, respectively, where $\alpha$ varies in the range of $0.01$ to $0.1$.
It can be seen from Fig. \ref{OP_Ps_alpha}(a) that the outage probability of all cases is lower than the benchmark, and the lossless waveguide case achieves the best outage probability, demonstrating the importance of the PA in improving communication performance even with moderately lossy waveguide.
Moreover, the propagation loss in the waveguide increases with $\alpha$, which leads to an increase in outage probability.
Unexpectedly, Fig. \ref{OP_Ps_alpha}(b) shows that the outage probability increases for all cases as the dimensions of user area $D_x$ and $D_y$ increase, especially for the case $\alpha = 0.1$, whose outage probability is even higher than that of the benchmark.
This is due to the fact that under high $\alpha$ and $D_x$, the propagation loss in the waveguide is higher than the path loss in free-space, indicating the necessity of focusing on the in-waveguide propagation loss and the waveguide length in the practical PAS design.
Generally, Figs. \ref{OP_Ps_alpha}(a) and \ref{OP_Ps_alpha}(b) emphasize the characteristic of the PA to provide reliable communications, but this capability is weakened by high waveguide loss and a broader user area.

\begin{figure}[t]
    \centering
    \includegraphics[width=\linewidth]{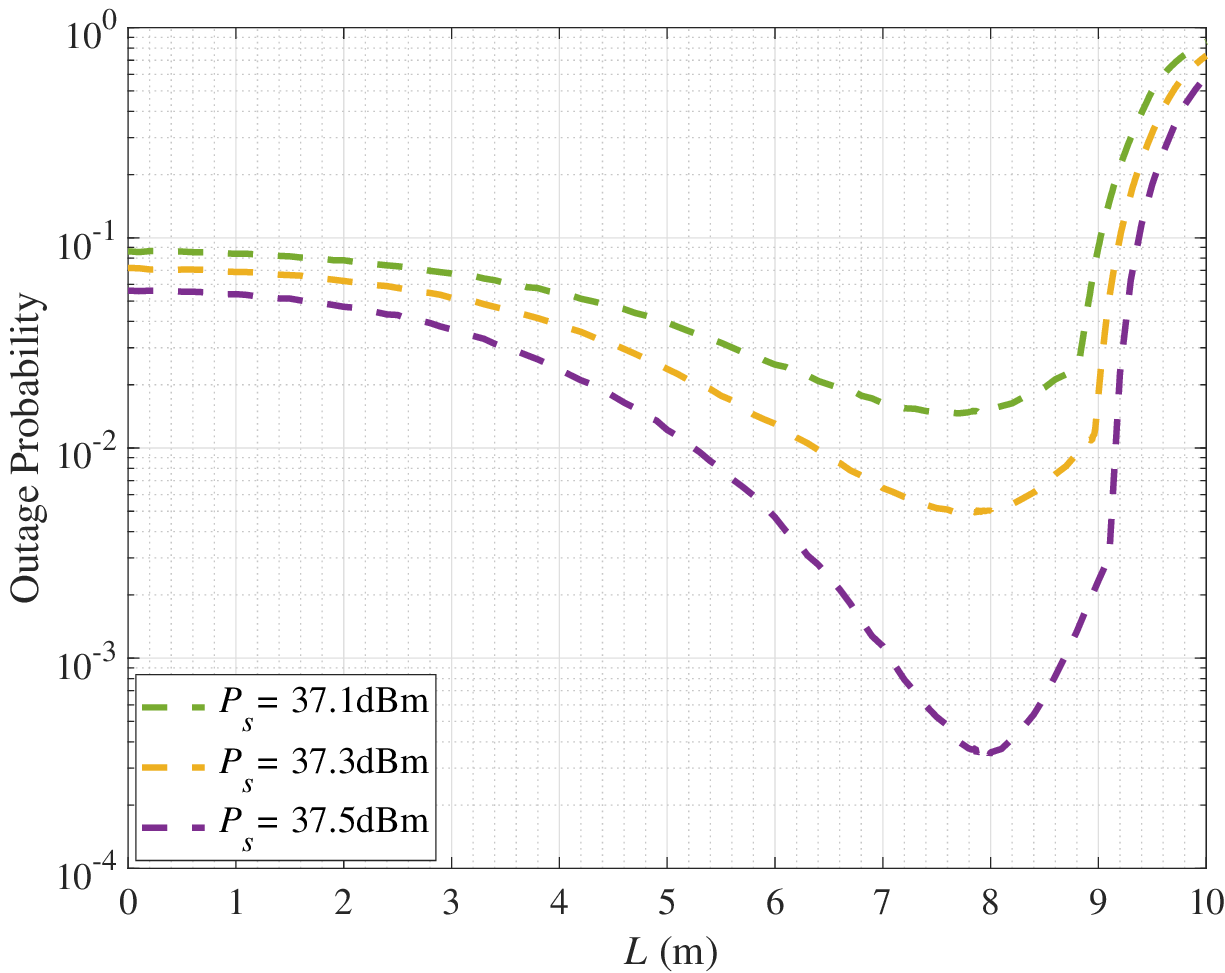}
    \caption{Outage probability versus $L$ for different $P_s$, where $D_x=D_y=10$ m.}
    \label{OP_L_diff_Ps}
\end{figure}

\begin{figure}[t]
    \centering
    \includegraphics[width=\linewidth]{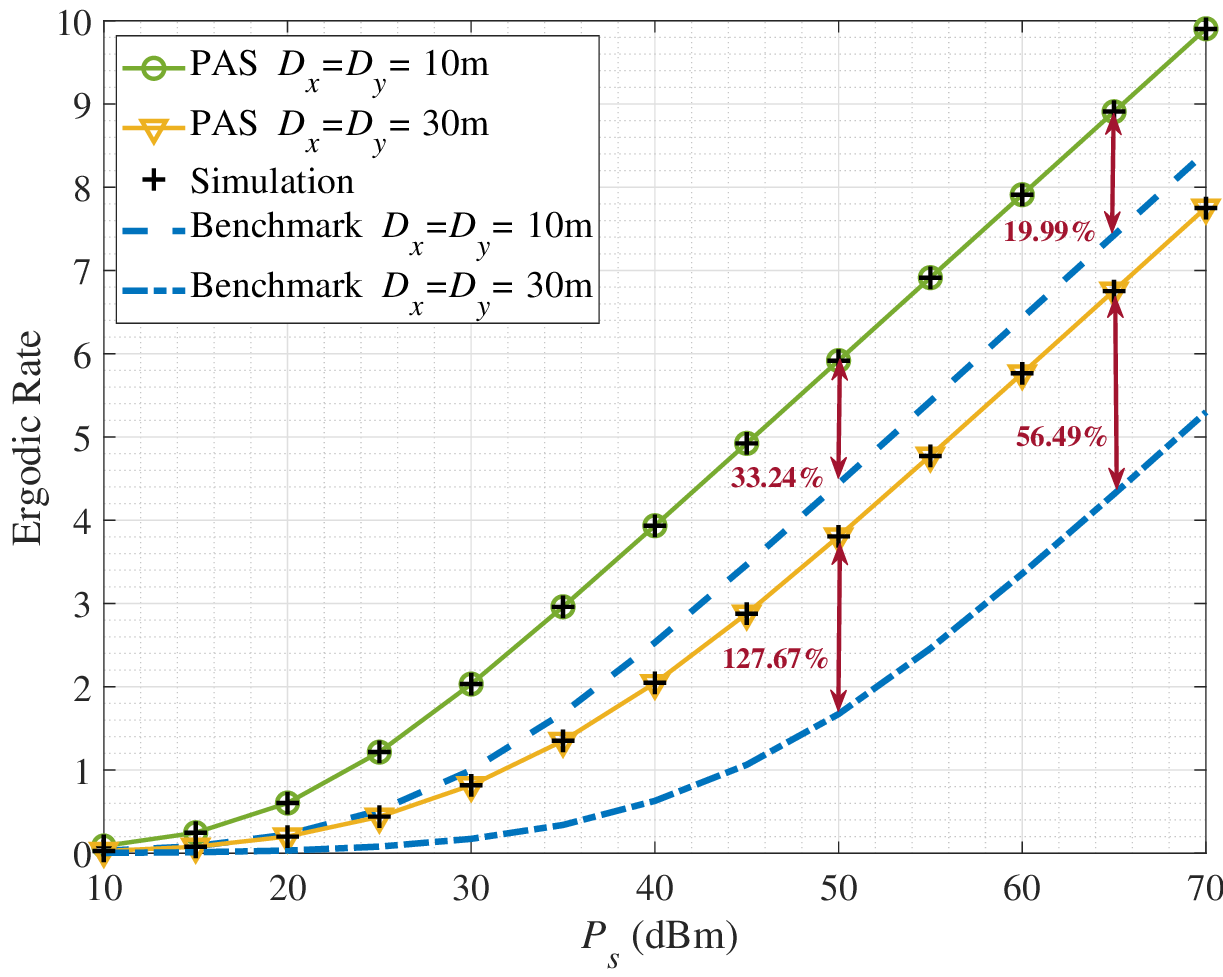}
    \caption{Ergodic rate versus $P_s$ for different $D_x$ and $D_y$.}
    \label{Avg_rate_Ps_DxDy}
\end{figure}

Fig. \ref{OP_L_diff_Ps} shows that the outage probability versus the distance between PS and AP $L$ for different transmission power $P_s$.
From this figure, we see that the outage probability of the wireless-powered PAS first decreases and then increases as $L$ increases, revealing that the wireless-powered PAS has an optimal distance between PS and AP $L^{opt}$ to minimize the outage probability.
The outage probability at $L^{opt}$ is much lower than that at $L=0$, which confirms that the system performance of PS and AP separated at an appropriate distance is superior to that of PS and AP integrated into a HAP.
In particular, when $L$ is greater than $L^{opt}$, the reliability of wireless-powered PAS deteriorates rapidly, and when $L$ increases to approach $D_y$, its performance is even worse than that of the HAP.
Consequently, it is essential to correctly design the distance between the PS and AP in wireless-powered PAS, as it significantly affects the PA's capability to achieve a reliable communication.

\begin{figure}[t]
    
    \centering
    \subfloat[$D_x=D_y=10$ m]{\includegraphics[width=\linewidth]{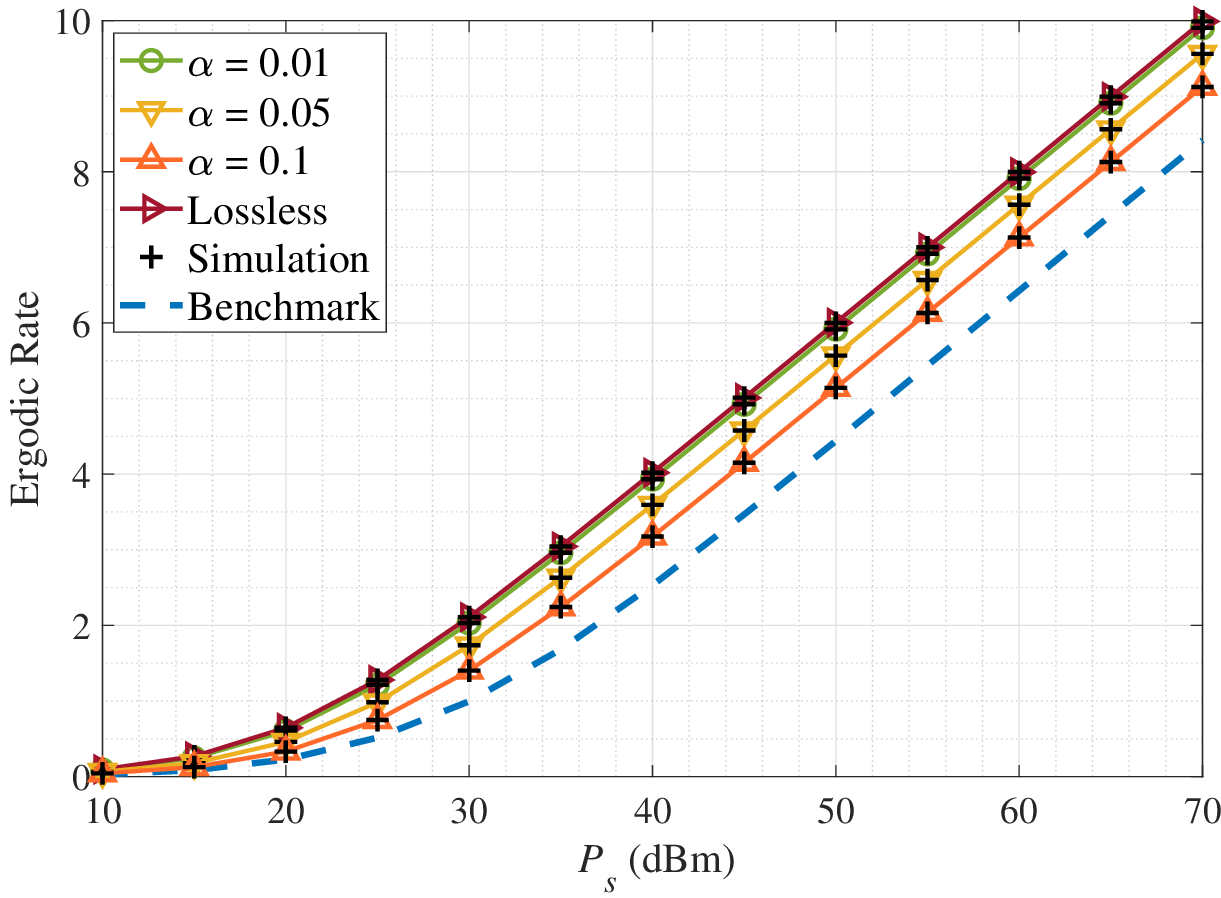}}\\
    \subfloat[$D_x=D_y=30$ m]{\includegraphics[width=\linewidth]{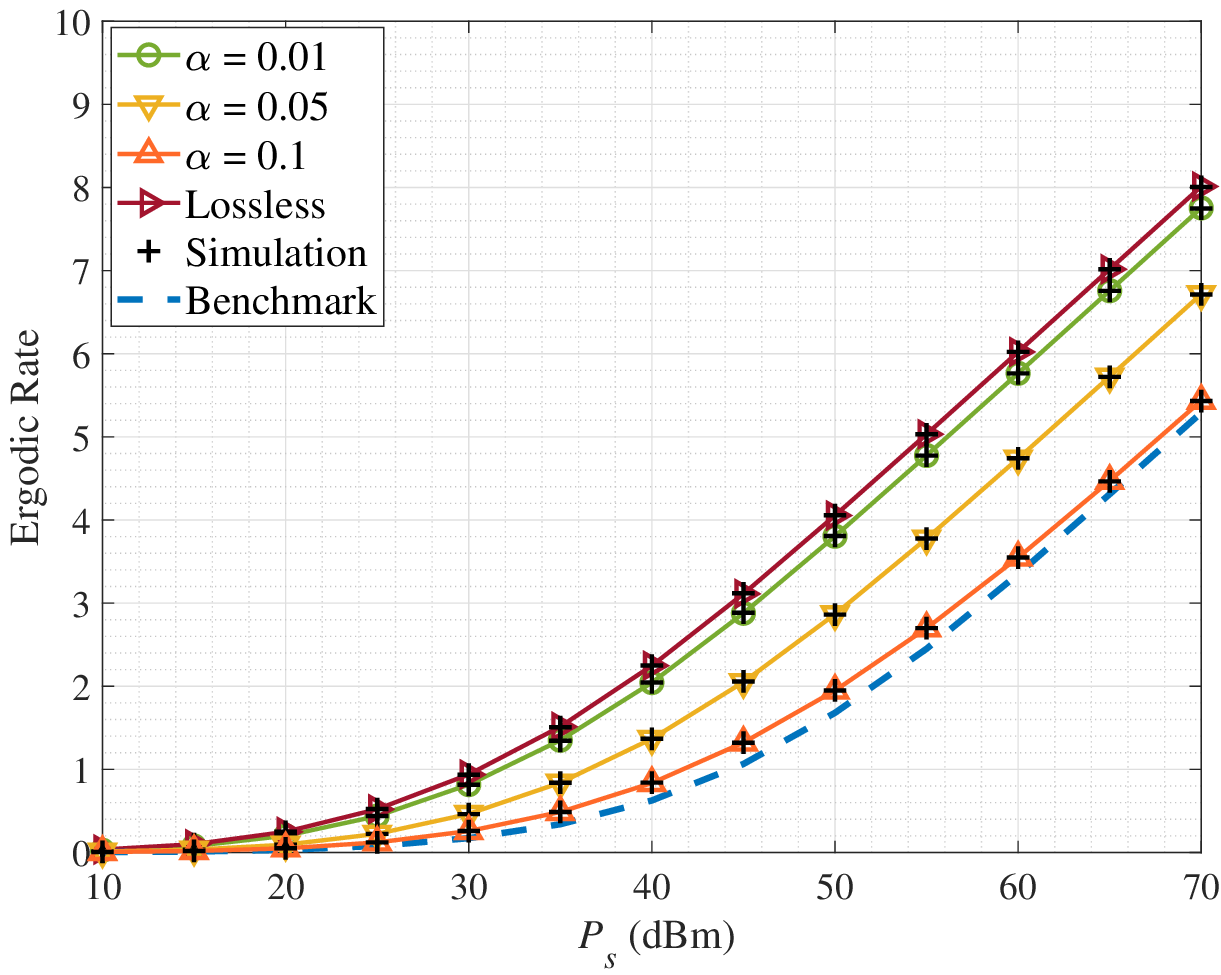}}
    
    \caption{Ergodic rate versus $P_s$ for different $\alpha$, where (a) $D_x=D_y=10$ m, and (b) $D_x=D_y=30$ m.}
    \label{Avg_rate_Ps_alpha}
\end{figure}

Fig. \ref{Avg_rate_Ps_DxDy} shows that the ergodic rate versus the transmission power $P_s$ for different dimensions of user area $D_x$ and $D_y$.
One can observe from the figure that the closed-form expressions of ergodic rate closely aligns with the simulation results, validating the accuracy of the theoretical analysis of the ergodic rate.
Intuitively, as $P_s$ increases, the ergodic rate increases steadily, while the increased $D_x$ and $D_y$ lead to the decrease of ergodic rate.
Additionally, under the same $P_s$, the difference in ergodic rate between wireless-powered PAS and traditional WPC system increases as $D_x$ and $D_y$ increase, pointing out that PA can effectively offset the adverse effects of an enlarged user area.
In contrast to low $D_x$ and $D_y$, under high $D_x$ and $D_y$, the difference in ergodic rate between wireless-powered PAS and traditional WPC system significantly rises as $P_s$ increases.
This result is in light of the fact that the high transmission loss caused by a wider user area can notably reduce the ergodic rate of the system, while the increase in transmission power further enhances the PA's capability to mitigate the path loss, enabling wireless-powered PAS to achieve a higher ergodic rate than traditional WPC system.
This conclusion highlights that the range of the user area can be flexibly controlled according to actual requirements to obtain the desired results.

Fig. \ref{Avg_rate_Ps_alpha} demonstrates that the ergodic rate versus the transmission power $P_s$ for different absorption coefficient $\alpha$, including the ideal case of a lossless waveguide.
Same as in Fig. \ref{OP_Ps_alpha}, Fig. \ref{Avg_rate_Ps_alpha}(a) and Fig. \ref{Avg_rate_Ps_alpha}(b) represent the cases of $D_x = D_y = 10$ m and $D_x = D_y = 30$ m, respectively, where $\alpha$ varies in the range of $0.01$ to $0.1$, and Fig. \ref{Avg_rate_Ps_alpha} draws a similar conclusion, namely, the larger $\alpha$, $D_x$, and $D_y$ are, the lower the ergodic rate of wireless-powered PAS is.
As expected, the ergodic rate grows with $P_s$, and the lossless waveguide case has the highest ergodic rate, showing the superiority of wireless-powered PAS in terms of data transmission.
Interestingly, by comparing Figs. \ref{OP_Ps_alpha}(b) and \ref{Avg_rate_Ps_alpha}(b), we find that when $\alpha=0.1$, the ergodic rate of wireless-powered PAS is better than that of traditional WPC system under high transmission power, but the conclusion regarding the outage probability is exactly the opposite.
The reason for this result is that the outage probability metric focuses on the area with low channel capacity, while the ergodic rate metric focuses on the ergodic achievable rate in the entire area.
Affected by high $\alpha$ and long waveguide, the in-waveguide propagation loss is quite large, resulting in a high outage probability for wireless-powered PAS when U is far away from PS and AP, while the ergodic rate of wireless-powered PAS is high when U is close to PS and AP, compensating for the high propagation loss and low rate when U is far away from PS and AP.

\begin{figure}[t]
    \centering
    \includegraphics[width=\linewidth]{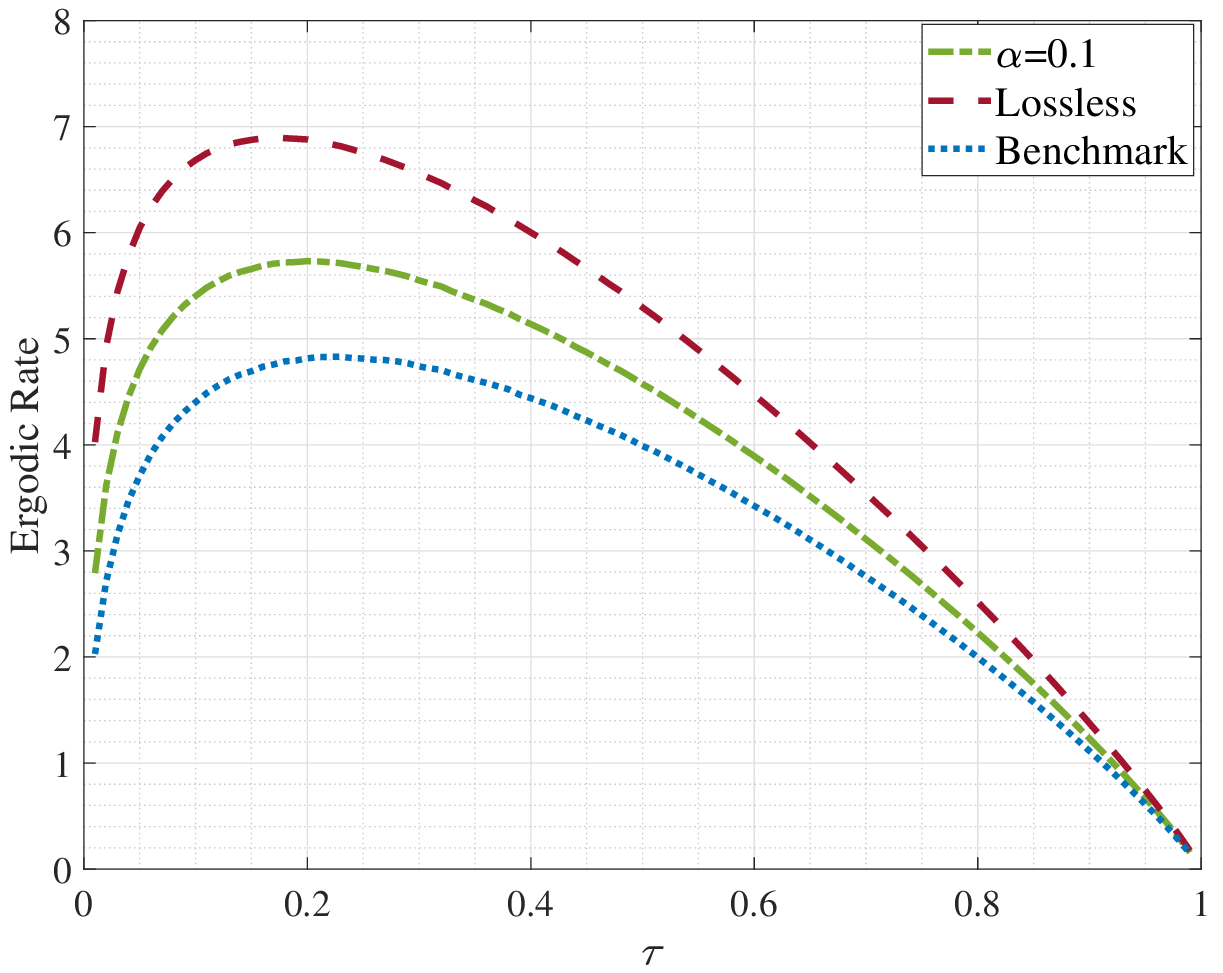}
    \caption{Ergodic rate versus $\tau$ for different $\alpha$, where $D_x=D_y=10$ m and $P_s=40$ dBm.}
    \label{Avg_rate_tau}
\end{figure}

Fig. \ref{Avg_rate_tau} depicts that the ergodic rate versus the time allocation factor $\tau$ for different transmission power $P_s$.
As shown in the figure, the ergodic rate of wireless-powered PAS first increases to its peak, and then rapidly declines to zero, suggesting that the wireless-powered PAS has an optimal time allocation factor $\tau^{opt}$ to maximize the ergodic rate.
Obviously, when $\tau$ approaches 1, the energy-constraint user has almost no time to harvest energy, so its energy is insufficient to support information transmission, leading to a decrease in the ergodic rate.
In contrast, when $\tau$ approaches 0, even if $\tau$ is very small, as long as $\tau$ is not equal to 0, the ergodic rate of wireless-powered PAS can still be maintained at a high level, indicating that wireless-powered PAS only needs a small amount of time for energy transfer to achieve a reliable communication.

\section{Conclusion}
In this paper, we first proposed the wireless-powered PAS, and derived the closed-form expressions of outage probability and ergodic rate for the practical lossy waveguide case and ideal lossless waveguide case, respectively, to comprehensively evaluate the communication performance of wireless-powered PAS.
Moreover, we achieved the optimal position analysis for the waveguides and user, providing valuable insights for guiding their deployments.

The results show that PA can be flexibly moved along the dielectric waveguide to mitigate the path loss and establish a strong LoS communication link.
In addition, the performance of wireless-powered PAS is severely affected by the absorption coefficient and the waveguide length, e.g., under conditions of high absorption coefficient and long waveguide, the outage probability of wireless-powered PAS is even worse than that of traditional WPC system.
However, the ergodic rate of wireless-powered PAS is better than that of traditional WPC system under conditions of high absorption coefficient and long waveguide.
Interestingly, the wireless-powered PAS has an optimal distance between PS and AP as well as an optimal time allocation factor to minimize the outage probability or maximize the ergodic rate.
Moreover, the system performance of PS and AP separated at the optimal distance between PS and AP is superior to that of PS and AP integrated into a HAP.


\renewcommand\thesubsectiondis{\Roman{subsection}.}

\setcounter{equation}{0}
\renewcommand{\theequation}{\thesection.\arabic{equation}}

\begin{appendices}
\section{Proof of Proposition 1}\label{AppendixA}

    Since $x_1^{pin} = x_2^{pin} = x_m$, the outage probability can be rewritten as
    \begin{align}\label{OP2}
    P_{out}=&{\rm Pr}\left[ \frac{\beta^2 \rho_t e^{-2\alpha x_m}}{ \left( \left(\frac{L}{2} - y_m \right)^2 + h^2 \right)  \left( \left(\frac{L}{2} + y_m \right)^2 + h^2 \right) } < \epsilon \right] \nonumber
    \\
    =&{\rm Pr}\left[ \frac{\chi e^{-2\alpha x_m}}{ \left( y_m^2 + h^2 - \frac{L^2}{4} \right)^2 + h^2 L^2 } < 1 \right] \nonumber
    \\
    =&{\rm Pr}\left[ \chi e^{-2\alpha x_m} - h^2 L^2  < \left( y_m^2 + h^2 - \frac{L^2}{4} \right)^2 \right].
    \end{align}
    
    To further derive $P_{out}$, we need to first determine the feasible region of $x_m$ and $y_m$, and then integrate over $x_m$ and $y_m$, respectively.
    Thus, the derivation of $P_{out}$ from the above inequality can be divided into four cases.

    \textbf{Case 1:} If $\chi e^{-2\alpha x_m} - h^2 L^2 \leqslant 0$, the outage probability of case 1 is given by
    \begin{equation}\label{case1}
        P_{out}^{\Rmnum{1}} = {\rm Pr}\left[ D_x \geqslant x_m \geqslant a \right] = \left\{
            \begin{aligned}
            &1,\quad a \leqslant  0\\
            &1-\frac{a}{D_x}, \quad a > 0\\
            &0, \quad a\geqslant D_x\\
            \end{aligned}
        \right.
    \end{equation}
    where $a \leqslant 0$, $a > 0$, and $a \geqslant D_x$ denote $h^2 L^2 \geqslant \chi$, $h^2 L^2 < \chi$, and $ h^2 L^2 \leqslant \chi e^{-2\alpha D_x}$, respectively.

    \textbf{Case 2:} If $\chi e^{-2\alpha x_m} - h^2 L^2 > 0$, $ h^2 - \frac{L^2}{4} > 0$ and $\sqrt{\chi e^{-2\alpha x_m} - h^2 L^2} + \frac{L^2}{4} - h^2 < 0$, the outage probability of case 2 is given by
    \begin{equation}\label{case2}
        P_{out}^{\Rmnum{2}} = {\rm Pr}\left[ a > x_m > b \right] = \left\{
            \begin{aligned}
            &1,\quad a \geqslant  D_x \ {\rm and} \ b \leqslant 0\\
            &1-\frac{b}{D_x}, \quad a\geqslant D_x \ {\rm and} \ b>0 \\
            &\frac{a}{D_x}, \quad  D_x > a > 0 \ {\rm and} \ b \leqslant 0 \\
            &\frac{a-b}{D_x}, \quad  D_x > a > b > 0 \\
            &0, \quad a \leqslant  0 \ {\rm or} \ b \geqslant D_x\\
            \end{aligned}
        \right.
    \end{equation}
    where $a<D_x$, $b \leqslant 0$, $b>0$, $b<D_x$, and $b \geqslant D_x$ denote $h^2 L^2 > \chi e^{-2\alpha D_x}$, $\left(h^2 - \frac{L^2}{4} \right)^2 + h^2L^2 \geqslant \chi$, $\left(h^2 - \frac{L^2}{4} \right)^2 + h^2L^2 < \chi$, $\left(h^2 - \frac{L^2}{4} \right)^2 + h^2L^2 >  \chi e^{-2 \alpha D_x}$, and $\left(h^2 - \frac{L^2}{4} \right)^2 + h^2L^2 \leqslant   \chi e^{-2 \alpha D_x}$, respectively.
    
    \textbf{Case 3:} If $\chi e^{-2\alpha x_m} - h^2 L^2 > 0$, $ h^2 - \frac{L^2}{4} > 0$ and $ \frac{D_y^2}{4} \geqslant  \sqrt{\chi e^{-2\alpha x_m} - h^2 L^2} + \frac{L^2}{4} - h^2 \geqslant  0$, we denote $z=\left( y_m^2 + h^2 - \frac{L^2}{4} \right)^2 + h^2 L^2$.
    Before deriving the outage probability of case 3, we need to obtain the cumulative distribution function (CDF) of $z$.
    \begin{align}\label{CDF}
        F_Z(z) =& {\rm Pr}\left[ \left( y_m^2 + h^2 - \frac{L^2}{4} \right)^2 + h^2 L^2 < z \right] \nonumber
        \\
        =&{\rm Pr}\left[  y_m^2 <  \sqrt{z - h^2 L^2} + \frac{L^2}{4} - h^2  \right] \nonumber
        \\
        =&\int_{-\sqrt{ \sqrt{z - h^2 L^2} + \frac{L^2}{4} - h^2 }}^{\sqrt{ \sqrt{z - h^2 L^2} + \frac{L^2}{4} - h^2 }} f(y_m) \,dy_m \nonumber
        \\
        =&\frac{2}{D_y}\sqrt{ \sqrt{z - h^2 L^2} + \frac{L^2}{4} - h^2 }.
    \end{align}

    Utilizing (\ref{CDF}), the outage probability of case 3 is given by
    \begin{align}\label{P1_case3_1}
        P_{out}^{\Rmnum{3}} =& {\rm Pr}\left[ z > \chi e^{-2\alpha x_m} \right] \nonumber
        \\
        =& \int_{x_1}^{x_2} \left( 1 - F_Z \left(\chi e^{-2\alpha x_m} \right) \right) f(x_m)  \,dx_m \nonumber
        \\
        =& \int_{x_1}^{x_2} \bigg( 1 - \frac{2}{D_x D_y} \nonumber
        \\
        &\times \sqrt{ \sqrt{\chi e^{-2\alpha x_m} - h^2 L^2} + \frac{L^2}{4} - h^2 } \bigg) \,dx_m.
    \end{align}
    
    Since $ \frac{D_y^2}{4} \geqslant  \sqrt{\chi e^{-2\alpha x_m} - h^2 L^2} + \frac{L^2}{4} - h^2 \geqslant  0$, i.e., $b \geqslant x_m \geqslant c$, we have
    \begin{equation}\label{P1_case3_2}
        \left\{
            \begin{aligned}
            &x_1 = 0 \quad x_2 = b, \quad  D_x > b > 0 \ {\rm and} \ c \leqslant 0 \\
            &x_1 = c \quad x_2 = b, \quad  D_x > b > c > 0 \\
            &x_1 = 0 \quad x_2 = D_x,\quad b \geqslant  D_x \ {\rm and} \ c \leqslant 0\\
            &x_1 = c \quad x_2 = D_x, \quad b\geqslant D_x \ {\rm and} \ D_x > c >0 \\
            &P_{out}^{\Rmnum{3}} = 0, \quad b \leqslant  0 \ {\rm or} \ c \geqslant D_x\\
            \end{aligned}
        \right.
    \end{equation}
    where $c \leqslant 0$, $c>0$, $c<D_x$, and $c \geqslant D_x$ denote $\left( \frac{D_y^2}{4} + h^2 - \frac{L^2}{4} \right)^2 + h^2L^2 \geqslant \chi$, $\left( \frac{D_y^2}{4} + h^2 - \frac{L^2}{4} \right)^2 + h^2L^2 < \chi$, $\left( \frac{D_y^2}{4} + h^2 - \frac{L^2}{4} \right)^2 + h^2L^2 >  \chi e^{-2 \alpha D_x}$, and $\left( \frac{D_y^2}{4} + h^2 - \frac{L^2}{4} \right)^2 + h^2L^2 \leqslant   \chi e^{-2 \alpha D_x}$, respectively.

    However, substituting (\ref{P1_case3_2}) into (\ref{P1_case3_1}), it is rather challenging to derive the closed-form expressions for $P_{out}^{\Rmnum{3}}$ due to its complicated integral.
    To overcome this problem, we first convert the interval $\left[x_1,x_2\right]$ to $\left[-1,1\right]$, and then use an efficient Gaussian-Chebyshev quadrature \cite[Eq. (25.4.38)]{Abramowitz1972Handbook} to obtain a close approximation, as shown in (\ref{P1_case3_3}).

    \begin{figure*}[!t]
    \begin{equation}\label{P1_case3_3}
        P_{out}^{\Rmnum{3}} =\left\{
            \begin{aligned}
            &\frac{b}{D_x}\left( 1 - \frac{\pi}{D_y K} \sum_{k = 1}^{K} \sqrt{1-\varpi_k^2} \sqrt{ \sqrt{\chi e^{-2\alpha \varphi_{k,1}} - h^2 L^2} + \frac{L^2}{4} - h^2} \right), \quad  D_x > b > 0 \ {\rm and} \ c \leqslant 0 \\
            &\frac{b-c}{D_x}\left( 1 - \frac{\pi}{D_y K} \sum_{k = 1}^{K} \sqrt{1-\varpi_k^2} \sqrt{ \sqrt{\chi e^{-2\alpha \varphi_{k,2}} - h^2 L^2} + \frac{L^2}{4} - h^2} \right), \quad  D_x > b > c > 0 \\
            &1 - \frac{\pi}{D_y K} \sum_{k = 1}^{K} \sqrt{1-\varpi_k^2} \sqrt{ \sqrt{\chi e^{-2\alpha \varphi_{k,3}} - h^2 L^2} + \frac{L^2}{4} - h^2},\quad b \geqslant  D_x \ {\rm and} \ c \leqslant 0\\
            &\frac{D_x-c}{D_x}\left( 1 - \frac{\pi}{D_y K} \sum_{k = 1}^{K} \sqrt{1-\varpi_k^2} \sqrt{ \sqrt{\chi e^{-2\alpha \varphi_{k,4}} - h^2 L^2} + \frac{L^2}{4} - h^2} \right), \quad b\geqslant D_x \ {\rm and} \ D_x > c >0 \\
            &0, \quad b \leqslant  0 \ {\rm or} \ c \geqslant D_x\\
            \end{aligned}
        \right.
    \end{equation}
    \hrulefill
    \end{figure*}

    \textbf{Case 4:} If $ \sqrt{\chi e^{-2\alpha x_m} - h^2 L^2} + \frac{L^2}{4} - h^2 > \frac{D_y^2}{4}$ and $\chi e^{-2\alpha x_m} - h^2 L^2 > 0$, $P_{out}^{\Rmnum{4}} = 0$.

    Considering the distribution of the four cases on the interval $[0, Dx]$ and $P_{out} = P_{out}^{\Rmnum{1}} + P_{out}^{\Rmnum{2}} + P_{out}^{\Rmnum{3}} + P_{out}^{\Rmnum{4}}$, the final results of Table \ref{table1} is derived, and the proof of Proposition \ref{proposition1} is completed.

\setcounter{equation}{0}
\section{Proof of Proposition 2}\label{AppendixB}

Substituting (\ref{SNRa}) into (\ref{Avg}), and considering $x_1^{pin} = x_2^{pin} = x_m$, the ergodic rate can be rewritten as
\begin{align}\label{Avg2}
    R_p&= (1-\tau) \mathbb{E} \left[ \log_2 \left( 1 + \frac{\beta^2 \rho_t e^{-2\alpha x_m}}{ \left( y_m^2 + h^2 - \frac{L^2}{4} \right)^2 + h^2 L^2 } \right) \right] \nonumber
    \\
    =& \frac{(1-\tau)}{D_x D_y \ln2}  \int_{-\frac{D_y}{2}}^{\frac{D_y}{2}} \int_{0}^{D_x} \ln \left( 1 + \frac{\beta^2 \rho_t e^{-2\alpha x_m}}{ p(y_m) } \right) dx_m dy_m   \nonumber
    \\
    \overset{(a)}{=}& \frac{2(1-\tau)}{D_x D_y \ln2} I_A - \frac{2(1-\tau)}{D_y \ln2} I_B,
\end{align}
where $I_A = \int_{0}^{\frac{D_y}{2}} \int_{0}^{D_x} \ln \left( p(y_m) + \beta^2 \rho_t e^{-2\alpha x_m} \right) dx_m dy_m   $, $I_B = \int_{0}^{\frac{D_y}{2}} \ln \left( p(y_m) \right)  dy_m$, and step $(a)$ uses the fact that $p(y_m) = ( y_m^2 + h^2 - \frac{L^2}{4} )^2  + h^2 L^2$ is an even function with respect to $y_m$. 

Since $p(y_m)$ is independent of $x_m$, we simply replace $p(y_m)$ with $p$ when solving the integral with respect to $x_m$.
Therefore, by setting $t = e^{-\alpha x_m}$, $I_A$ can be rewritten as
\begin{align}\label{IA}
    I_A = \frac{1}{\alpha}  \int_{0}^{\frac{D_y}{2}} \underbrace{\int_{e^{-\alpha D_x}}^{1} \frac{ \ln \left( p + \beta^2 \rho_t t^2 \right) }{t}  dt}_{I_A'} dy_m.
\end{align}

Let $q = \frac{p}{\beta^2 \rho_t}$, and $u = t^2$, $I_A'$ is given by
\begin{align}\label{IA'}
    I_A' =&  \ln \left( \beta^2 \rho_t \right) \ln t \vert_{e^{-\alpha D_x}}^{1}   + \int_{e^{-\alpha D_x}}^{1} \frac{ \ln \left( t^2 + q \right) }{t}  dt \nonumber
    \\
    =&\alpha D_x \ln \left( \beta^2 \rho_t \right) + \frac{1}{2} \ln q \ln u \vert_{e^{-2\alpha D_x}}^{1} \nonumber
    \\
    &+ \frac{1}{2} \underbrace{\int_{e^{-2\alpha D_x}}^{1} \frac{ \ln ( 1 + \frac{u}{q} ) }{u} du}_{I_A''}.
\end{align}

Let $k=\frac{u}{q}$ and $m = -k$, $I_A''$ is expressed as 
\begin{align}\label{IA''}
    I_A'' =&  \int_{\frac{e^{-2\alpha D_x}}{q}}^{\frac{1}{q}} \frac{ \ln ( 1 + k ) }{k} dk \nonumber
    \\
    =&\int_{-\frac{e^{-2\alpha D_x}}{q}}^{-\frac{1}{q}} \frac{ \ln ( 1 - m ) }{m} dm \nonumber
    \\
    =&{\rm Li}_2\left( -\frac{e^{-2\alpha D_x}}{q} \right) - {\rm Li}_2\left( -\frac{1}{q} \right),
\end{align}
where ${\rm Li}_2\left( x \right) = -\int_{0}^{x} \frac{1-m}{m} \,dm $ \cite[Eq. (6.254.1)]{10.1115/1.3138251}.

Substituting (\ref{IA'}) and (\ref{IA''}) into (\ref{IA}), we have
\begin{align}\label{IA_result}
    I_A =& \frac{1}{\alpha}  \int_{0}^{\frac{D_y}{2}} \alpha D_x \ln p(y_m) + \frac{1}{2} \bigg(  {\rm Li}_2\left( -\frac{\beta^2 \rho_t e^{-2\alpha D_x}}{p(y_m)} \right) \nonumber
    \\
    &- {\rm Li}_2\left( -\frac{\beta^2 \rho_t}{p(y_m)} \right)   \bigg) dy_m \nonumber
    \\
    =&D_x I_B + \frac{1}{2\alpha} \int_{0}^{\frac{D_y}{2}}  {\rm Li}_2\left( -\frac{\beta^2 \rho_t e^{-2\alpha D_x}}{p(y_m)} \right) \nonumber
    \\
    &- {\rm Li}_2\left( -\frac{\beta^2 \rho_t}{p(y_m)} \right)    dy_m.
\end{align}

Inserting (\ref{IA_result}) into (\ref{Avg2}), $I_B$ is removed, and exploiting the Gaussian-Chebyshev quadrature, the final result of (\ref{Avg_result}) is derived, which concludes the proof of Proposition \ref{proposition2}.

\setcounter{equation}{0}
\section{Proof of Proposition 3}\label{AppendixC}

Since $\alpha=0$, and $x_1^{pin} = x_2^{pin} = x_m$, the outage probability can be rewritten as
\begin{align}\label{OP3}
    P_{out}=&{\rm Pr}\left[ \frac{\beta^2 \rho_t }{ \left( \left(\frac{L}{2} - y_m \right)^2 + h^2 \right)  \left( \left(\frac{L}{2} + y_m \right)^2 + h^2 \right) } < \epsilon \right] \nonumber
    \\
    =&{\rm Pr}\left[ \frac{\chi }{ \left( y_m^2 + h^2 - \frac{L^2}{4} \right)^2 + h^2 L^2 } < 1 \right] \nonumber
    \\
    =&{\rm Pr}\left[ \chi  - h^2 L^2  < \left( y_m^2 + h^2 - \frac{L^2}{4} \right)^2 \right].
\end{align}

Similar to Proposition \ref{proposition1}, the derivation of $P_{out}$ is also divided into four cases.

\textbf{Case 1:} If $\chi - h^2 L^2 \leqslant 0$, $P_{out}^{\Rmnum{1}} = 1$.

\textbf{Case 2:} If $\chi  - h^2 L^2 > 0$, $ h^2 - \frac{L^2}{4} > 0$ and $\sqrt{\chi - h^2 L^2} + \frac{L^2}{4} - h^2 < 0$, $P_{out}^{\Rmnum{2}} = 1$.

\textbf{Case 3:} If $\chi - h^2 L^2 > 0$, $ h^2 - \frac{L^2}{4} > 0$ and $ \frac{D_y^2}{4} \geqslant  \sqrt{\chi - h^2 L^2} + \frac{L^2}{4} - h^2 \geqslant  0$, we have
\begin{align}\label{P3_case3}
    P_{out}^{\Rmnum{3}}=&{\rm Pr}\left[ z > \chi  \right] \nonumber
    \\
    =&1 - F_Z(\chi) \nonumber
    \\
    =&1 - \frac{2}{D_y} \sqrt{ \sqrt{\chi - h^2 L^2} + \frac{L^2}{4} - h^2}.
\end{align}

\textbf{Case 4:} If $ \sqrt{\chi - h^2 L^2} + \frac{L^2}{4} - h^2 > \frac{D_y^2}{4}$ and $\chi - h^2 L^2 > 0$, $P_{out}^{\Rmnum{4}} = 0$.

Similarly, $P_{out} = P_{out}^{\Rmnum{1}} + P_{out}^{\Rmnum{2}} + P_{out}^{\Rmnum{3}} + P_{out}^{\Rmnum{4}}$, the final results of Table \ref{table2} is derived, and the proof of Proposition \ref{proposition3} is completed.

\setcounter{equation}{0}
\section{Proof of Proposition 4}\label{AppendixD}

Considering $\alpha=0$, and $x_1^{pin} = x_2^{pin} = x_m$, the ergodic rate can be rewritten as
\begin{align}\label{Avg3}
    R_p&= (1-\tau) \mathbb{E} \left[ \log_2 \left( 1 + \frac{\beta^2 \rho_t }{ \left( y_m^2 + h^2 - \frac{L^2}{4} \right)^2 + h^2 L^2 } \right) \right] \nonumber
    \\
    =& \frac{2(1-\tau)}{D_y \ln2}  \int_{0}^{\frac{D_y}{2}} \ln \left( 1 + \frac{\beta^2 \rho_t }{  \left( y_m^2 + h^2 - \frac{L^2}{4} \right)^2 + h^2 L^2 } \right) dy_m   \nonumber
    \\
    =& \frac{2(1-\tau)}{D_y \ln2} \left( I_C -  I_B  \right),
\end{align}
where $I_C =  \int_{0}^{\frac{D_y}{2}} \ln \Big( \left( y_m^2 + h^2 - \frac{L^2}{4} \right)^2 + h^2 L^2 + \beta^2 \rho_t  \Big) dy_m$.

By using factorization, $I_B$ is expressed as
\begin{align}\label{IB}
    I_B =& \underbrace{\int_{0}^{\frac{D_y}{2}} \ln \left( y_m^2 + L y_m + h^2 + \frac{L^2}{4} \right)  dy_m}_{I_B'} \nonumber
    \\
    &+\underbrace{\int_{0}^{\frac{D_y}{2}} \ln \left( y_m^2 - L y_m + h^2 + \frac{L^2}{4} \right)  dy_m}_{I_B''}.
\end{align}

Furthermore, performing integration by parts, $I_B'$ is given by
\begin{align}\label{IB'}
    I_B' =& \frac{D_y}{2} \ln \left( \left( \frac{D_y}{2} + \frac{L}{2} \right)^2 + h^2 \right) - D_y \nonumber
    \\
    & + \underbrace{\int_{0}^{\frac{D_y}{2}} \frac{Ly_m + 2h^2 + \frac{L^2}{2}}{ \left( y_m + \frac{L}{2} \right)^2 + h^2}  dy_m}_{I_B'''}.
\end{align}

Let $w = y_m + \frac{L}{2}$, we obtain
\begin{align}\label{IB'''}
    I_B''' =& \int_{\frac{L}{2}}^{\frac{D_y + L}{2}} \frac{Lw + 2h^2}{ w^2 + h^2}  dw \nonumber
    \\
    =&L\int_{\frac{L}{2}}^{\frac{D_y + L}{2}} \frac{w}{ w^2 + h^2} dw + 2h^2 \int_{\frac{L}{2}}^{\frac{D_y + L}{2}} \frac{1}{ w^2 + h^2}  dw \nonumber
    \\
    =&\frac{L}{2} \ln \left( \frac{ (D_y + L)^2 + 4h^2 }{L^2 + 4h^2} \right) + 2h \Big( \arctan \left( \frac{D_y + L}{2h} \right) \nonumber
    \\
    &- \arctan \left( \frac{L}{2h} \right) \Big).
\end{align}

Substituting (\ref{IB'''}) into (\ref{IB'}), $I_B'$ is expressed as
\begin{align}\label{IB'_result}
    I_B' =& \frac{D_y + L}{2} \ln \left( \left( D_y + L \right)^2 + 4h^2 \right) - (\ln 2 + 1)D_y  \nonumber
    \\
    &- \frac{L}{2} \ln (L^2 + 4h^2) + 2h \Big( \arctan \left( \frac{D_y + L}{2h} \right) \nonumber
    \\
    &- \arctan \left( \frac{L}{2h} \right) \Big).
\end{align}

In addition, by replacing $L$ with $-L$, $I_B''$ is written as
\begin{align}\label{IB''_result}
    I_B'' =& \frac{D_y-L}{2} \ln \left( \left( D_y - L \right)^2 + 4h^2 \right) - (\ln 2 + 1)D_y  \nonumber
    \\
    &+ \frac{L}{2} \ln (L^2 + 4h^2) + 2h \Big( \arctan \left( \frac{D_y - L}{2h} \right) \nonumber
    \\
    &+ \arctan \left( \frac{L}{2h} \right) \Big).
\end{align}

Inserting (\ref{IB'_result}) and (\ref{IB''_result}) into (\ref{IB}), $I_B$ is derived.

By using factorization, $v=\sqrt{2\left( u - h^2 + \frac{L^2}{4} \right)}$, and $u = \sqrt{ \left( h^2 - \frac{L^2}{4} \right)^2 + h^2L^2 + \beta^2 \rho_t }$, $I_C$ is expressed as
\begin{align}\label{IC}
    I_C =&  \underbrace{\int_{0}^{\frac{D_y}{2}} \ln \left( y_m^2 + v y_m + u \right)  dy_m }_{I_C'}\nonumber
    \\
    &+ \underbrace{\int_{0}^{\frac{D_y}{2}} \ln \left( y_m^2 - v y_m + u \right)  dy_m}_{I_C''}.
\end{align}

Following the similar steps to the derivation of $I_B'$, and using $r = y_m + \frac{p}{2}$ and $s = \sqrt{u - \frac{v^2}{4}}$, we have
\begin{align}\label{IC'_result}
    I_C' =& \frac{D_y + v}{2} \ln \left( \left( D_y + v \right)^2 + 4s^2 \right) - (\ln 2 + 1)D_y  \nonumber
    \\
    &- \frac{v}{2} \ln (4u) + 2s \Big( \arctan \left( \frac{D_y + v}{2s} \right) \nonumber
    \\
    &- \arctan \left( \frac{v}{2s} \right) \Big).
\end{align}

Moreover, replacing $v$ in (\ref{IC'_result}) with $-v$, $I_C''$ is given by
\begin{align}\label{IC''_result}
    I_C'' =& \frac{D_y - v}{2} \ln \left( \left( D_y - v \right)^2 + 4s^2 \right) - (\ln 2 + 1)D_y  \nonumber
    \\
    &+ \frac{v}{2} \ln (4u) + 2s \Big( \arctan \left( \frac{D_y - v}{2s} \right) \nonumber
    \\
    &+ \arctan \left( \frac{v}{2s} \right) \Big).
\end{align}
Thus, $I_C = I_C' + I_C''$ is obtained.

Finally, substituting $I_B$ and $I_C$ into (\ref{Avg3}), we obtain the result of (\ref{Avg_result_lossless}).
The proof of Proposition \ref{proposition4} is completed.

\end{appendices}

\bibliographystyle{IEEEtran}
\bibliography{IEEEabrv,citation}

\end{document}